\tikzset{
modal/.style={>=stealth',shorten >=1pt,shorten <=1pt,auto,node distance=1.5cm,
semithick},
world/.style={circle,draw,minimum size=0.5cm,fill=gray!15},
point/.style={circle,draw,inner sep=0.5mm,fill=black},
reflexive above/.style={->,loop,looseness=7,in=120,out=60},
reflexive below/.style={->,loop,looseness=7,in=240,out=300},
reflexive left/.style={->,loop,looseness=7,in=150,out=210},
reflexive right/.style={->,loop,looseness=7,in=30,out=330}
}
\tikzstyle{level 1}=[level distance=1.5cm, sibling distance=1.5cm]
\tikzstyle{level 2}=[level distance=1.5cm, sibling distance=2cm]
\tikzstyle{bag} = [text width=8em, text centered]
\definecolor{comk}{RGB}{154, 205, 50}
\definecolor{Gray1}{gray}{0.9}
\definecolor{Gray2}{gray}{0.7}
\newcommand{\IRR}{\mathsf{IRR}}
\newcommand{\biggv}{\rotatebox[origin=c]{90}{\Bigg\{ }}
\newcommand{\biggh}{\Bigg\{}
\newenvironment{customlem}[1]
  {\innercustomlem}
  {\endinnercustomlem}
\newenvironment{customthm}[1]
  {\innercustomthm}
  {\endinnercustomthm}
\newcommand {\Diam}{\rotatebox[origin=c]{45}{$\Box$}}
\newcommand{\Oi}{\otimes_i} 
\newcommand{\Oa}{\otimes_{Ag}}
\newcommand{\ODi}{\ominus_i}
\newcommand{\btr}{\blacktriangledown}
\newcommand{\Ms}{\mathsf{M}}
\newcommand{\Rs}{\mathsf{R}}
\newcommand{\Ws}{\mathsf{W}}
\newcommand{\Vs}{\mathsf{V}}
\newcommand{\Md}{\mathcal{M}^d}
\newcommand{\Wd}{\mathcal{W}}
\newcommand{\Vd}{\mathcal{V}}
\newcommand{\Rd}{\mathcal{R}}
\newcommand{\utstit}{\mathsf{TUS}}
\newcommand{\dtstit}{\mathsf{TDS}}
\newcommand{\dstit}{\mathsf{DS}}
\newcommand{\tds}{\mathsf{TDS}}
\newcommand{\tstit}{\mathsf{Tstit}}
\newcommand{\xstit}{\mathsf{Xstit}}
\newcommand{\ioa}{(\mathsf{IOA})}
\newcommand{\R}{\mathcal{R}}
\newcommand{\agdia}{\langle i \rangle}
\newcommand{\lb}{\langle}
\newcommand{\rb}{\rangle}
\newcommand{\g}{\mathsf{G}}
\newcommand{\h}{\mathsf{H}}
\newcommand{\f}{\mathsf{F}}
\newcommand{\p}{\mathsf{P}}
\newcommand{\lang}{\mathcal{L}_{\dtstit}}
\newcommand{\der}{\vdash_{\dtstit}}
\title{A Neutral Temporal Deontic STIT Logic\thanks{This is a pre-print of an article published in Logic, Rationality, and Interaction. The final authenticated version is available online at: \url{https://doi.org/10.1007/978-3-662-60292-8_25}. Work funded by the projects WWTF MA16-028, FWF I2982 and FWF W1255-N23.}
}
\author{Kees van Berkel\textsuperscript{(\Letter)} \and Tim Lyon}
\institute{Institut f\"ur Logic and Computation, Technische Universit\"at Wien, Austria  \\ \email{\{kees,lyon\}@logic.at}}
\authorrunning{K. van Berkel and T. Lyon}
\begin{document}

\maketitle              
\begin{abstract}
In this work we answer a long standing request for temporal embeddings of deontic STIT logics by introducing the multi-agent STIT logic $\tds$. The logic is based upon atemporal utilitarian STIT logic. Yet, the logic presented here will be neutral: instead of committing ourselves to utilitarian theories, we prove the logic $\dtstit$ sound and complete with respect to relational frames not employing any utilitarian function. We demonstrate how these neutral frames can be transformed into utilitarian temporal frames, while preserving validity. Last, we discuss 
 problems that arise from employing binary utility functions in a temporal setting. 

\keywords{Deontic logic \and Logics of agency \and Modal logic \and Multi-agent STIT logic \and Temporal logic \and Utilitarianism}
\end{abstract}


\section{Introduction}

With the increasing integration of automated machines in our everyday lives, the development of formal decision-making tools, which take into account moral and legal considerations, is of critical importance~\cite{ArkBriBel05,Ger15,Goo14}. Unfortunately, one of the fundamental hazards of incorporating ethics into decision-making processes, is the apparent incomparability of quantitative and qualitative information---that is, moral
 problems most often 
 resist quantification \cite{NayKoe03}. 

In contrast, 
 utility functions are useful quantitative tools for the formal analysis of decision-making. 
 Initially formulated 
 in \cite{Ben89}, the influential 
 theory of \textit{utilitarianism} has promoted utility calculation as a ground for \textit{ethical deliberation}: 
 in short, those actions 
 generating 
 highest utility, are 
  the morally right actions.
 For this reason, 
 utilitarianism has proven itself to be a fruitful approach 
 in the field of 
 formal deontic reasoning and 
  multi-agent systems (e.g. \cite{BroRam19,Hor01,Mur04}).  

In particular, in the field of STIT logic---
 agency logics developed primarily for the formal analysis of 
 multi-agent 
  choice-making---the utilitarian approach has received increased attention (e.g. \cite{BroRam19,Mur04}). Unfortunately, each available utility function comes with its own (dis)advantages, giving rise to several 
   puzzles (some of them addressed in~\cite{Hor01,HorPac17}). To avoid such problems, we provide an alternative approach:
 instead of 
 settling these 
  philosophical issues, we develop a neutral 
 formalism that can be appropriated to different utilitarian value assignments.

The paper's contributions 
can be summed up as follows: First, we provide a temporal deontic STIT logic called $\dtstit$ (Sec.~\ref{section2a}). With this logic, we answer a long standing request for temporal embeddings of deontic STIT 
 \cite{BelPerXu01,Hor01,Mur04}. Second, although $\tds$ is based upon the atemporal utilitarian STIT logic from 
  \cite{Mur04}, the semantics of $\tds$ will be neutral: 
  instead of committing to utilitarianism, 
 we prove soundness and completeness of $\tds$ with respect to 
  relational frames not employing any utilitarian function (Sec.~\ref{section2b}). 
  This approach also extends the results in \cite{BalHerTro08,HerSch08,Lor13} by showing that $\dtstit$ can be characterized without using 
  the traditional branching-time (BT+AC) structures  
   (cf. \cite{BelPerXu01}). Third, we show how neutral $\tds$ frames can be transformed into 
 utilitarian frames, while preserving validity (Sec.~\ref{section3}). Last, we 
  discuss the philosophical ramifications of employing available utility functions in the extended, temporal setting. In particular, we will argue that binary utility assignments can turn out to be problematic. 

\section{A Neutral Temporal Deontic STIT Logic}\label{section2a}

In this section, we introduce the language, semantics, and axiomatization of the temporal deontic STIT logic $\dtstit$. In particular, we 
  provide neutral relational frames characterizing the logic, which omit mention of specific utility functions.
The logic will bring together 
 atemporal deontic STIT logic, presented in \cite{Mur04}, 
%
and the temporal STIT logic from 
 \cite{Lor13}.
 

\begin{definition}[The Language $\mathcal{L}_{\dtstit}$]\label{udtstitlanguage} Let $Ag = \{1,2,...,n\}$ be a finite set of agent labels and let $Var =\{p_1,p_2,p_3...\}$ be a countable set of propositional variables. The language $\mathcal{L}_{\dtstit}$ is given by the following BNF grammar: 
{\small $$\phi ::= p \ | \ \lnot \phi \ | \ \phi \land \phi \ | \ \Box \phi \ |  \ [i] \phi \ | \ [Ag] \phi \ | \ \g \phi \ | \ \h \phi \ | \ \Oi\phi$$}
where $i\in Ag$ and $p \in Var$.


\end{definition}
The logical connectives disjunction $\lor$, implication $\rightarrow$, and bi-conditional $\leftrightarrow$ are defined in the usual way. Let $\bot$ be defined as $p\land \lnot p$ and define $\top$ to be $p\lor\lnot p$. The language consists of single agent STIT operators $[i]$, which are choice-operators describing that `agent $i$ sees to it that', and the grand coalition operator $[Ag]$, expressing `the grand coalition of agents sees to it that'.  Furthermore, it contains a settledness operator $\Box$, which holds true of a formula that is settled true at a moment, and thus, holds true regardless of the choices made by any of the agents at that moment.
 The operators $\g$ and $\h$ have, respectively, the usual temporal interpretation `always going to be' and `always has been'. Last, the operator $\Oi$ expresses `agent $i$ ought to see to it that'. 
We define $\Diam, \langle i \rangle, \langle Ag\rangle$ and $\ODi$ as the \emph{duals} of $\Box,[i],[Ag]$ and $\Oi$, respectively (i.e. $\Diam \phi$ iff $\lnot\Box\lnot\phi$, etc.). Furthermore, let $\f\phi$ iff $\lnot \g\lnot \phi$ and $\p\phi$ iff $\lnot\h\lnot \phi$, expressing 
 `$\phi$ holds somewhere in the future' and `$\phi$ holds somewhere in the past', respectively. Finally, deliberative STIT and deliberative ought are obtained accordingly: $[i]^d\phi$ iff $[i]\phi \land \Diamond\lnot \phi$ and $\Oi^d\phi$ iff $\Oi\phi \land \Diam\lnot\phi$. For a discussion of these operators we refer to \cite{Hor01,Lor13}. 

In line with \cite{BalHerTro08,BerLyo19,HerSch08,Lor13}, we provide 
 relational frames for 
 $\tds$ 
  instead of introducing the traditionally employed, BT+AC frames~(cf.~\cite{BelPerXu01}). Explanations of the individual frame properties of Definition \ref{models_dtstit} can be found 
   below.

\begin{definition}[Relational $\dtstit$ Frames and Models]\label{models_dtstit} A 
$\tds$-frame
 is defined as a tuple $F = (W, \R_{\Box}, \{\R_{[i]} \ | \ i \in Ag\}, \R_{[Ag]}, \R_{\g}, \R_{\h}, \{\R_{\Oi} \ | \ i \in Ag\} )$. Let $\R_{[\alpha]}(w) := \{v\in W | (w,v)\in R_{[\alpha]}\}$ for $[\alpha]\in \mathsf{Boxes}$ where
 $\mathsf{Boxes} := \{\Box, \g,\h,[Ag]\} \cup \{[i] \ | \ i\in Ag\}\cup\{\Oi \ | \ i\in Ag\}$. Let $W$ be a non-empty set of worlds $w,v,u...$ and: 
 
\vspace{5pt}
\renewcommand{\arraystretch}{1.1}
\noindent \begin{tabular}{p{1em} p{30pt} p{295pt}}
$\blacktriangleright$ & \multicolumn{2}{l}{For all $i\in Ag$, $\R_{\Box}, \R_{[i]}, \R_{[Ag]} \subseteq W{\times} W$ are equivalence relations such that:}\\
 & {\rm \textbf{(C1)}} & 
 $\R_{[i]} \subseteq \R_{\Box}$. \\
 & {\rm \textbf{(C2)}} & For all $u_{1},...,u_{n} \in W$, if $\R_{\Box}u_{i}u_{j}$ for all $1 \leq i,j \leq n$, then $\bigcap_{i} \R_{[i]}(u_{i}) \neq \emptyset$. \\
 & {\rm \textbf{(C3)}} & For all $w \in W$, $\R_{[Ag]}(w) \subseteq \bigcap_{i \in Ag} \R_{[i]}(w)$. \\
$\blacktriangleright$ & \multicolumn{2}{p{325pt}}{$\R_{\g}\subseteq W\times W$ is a transitive and serial binary relation 
and $\R_{\h}$ is the converse of $\R_{\g}$, such that:}\\ 
 & {\rm \textbf{(T4)}} & For all $w, u, v \in W$, if $\R_{\g}wu$ and $\R_{\g}wv$, then $\R_{\g}uv$, $u = v$, or $\R_{\g}vu$.\\
 & {\rm \textbf{(T5)}} & For all $w, u, v \in W$, if $\R_{\h}wu$ and $\R_{\h}wv$, then $\R_{\h}uv$, $u = v$, or $\R_{\h}vu$.\\
 & {\rm \textbf{(T6)}} & $\R_{\g} \circ \R_{\Box} \subseteq \R_{[Ag]} \circ \R_{\g}$ (relation composition $\circ$ is defined as usual).\\
 & {\rm \textbf{(T7)}} & For all $w,u \in W$, if $u \in \R_{\Box}(w)$, then $u \not\in \R_{\g}(w)$.\\
$\blacktriangleright$ & \multicolumn{2}{l}{For all $i\in Ag$, $\R_{\Oi}\subseteq W{\times} W$ are binary relations such that:} \\
 & {\rm \textbf{(D8)}} & 
  $\R_{\Oi}\subseteq\R_{\Box}$.\\
 & {\rm \textbf{(D9)}} & For all $w\in W$ there exists a $v\in W$ such that $\R_{\Box}wv$ and for all $u\in W$, if $\R_{[i]}vu$ then $\R_{\Oi}wu$.\\
 & {\rm \textbf{(D10)}} & For all $w,v,u,z\in W$, if $\R_{\Box}wv, \R_{\Box}wu$ and $\R_{\Oi}uz$, then $\R_{\Oi}vz$.\\
 & {\rm \textbf{(D11)}} & For all $w,v\in W$, if $\R_{\Oi}wv$ then there exists $u\in W$ s.t. $\R_{\Box}wu$, $\R_{[i]}uv$, and for all $z\in W$, if $\R_{[i]}uz$ then $\R_{\Oi}wz$.\\
\end{tabular}

\vspace{3pt}
\noindent A $\dtstit$-model is a tuple $M = (F,V)$ where $F$ is a $\dtstit$-frame and $V$ is a valuation mapping propositional variables to subsets of $W$, that is, 
  $V{:}\ Var \to \mathcal{P}(W)$. 
\end{definition}

We label the properties of Definition \ref{models_dtstit} referring to choice 
 \textbf{(Ci)}, those relating to temporal aspects 
  \textbf{(Ti)}, and those capturing deontic properties 
   \textbf{(Di)}. Observe that, since $\R_{\Box}$ is an equivalence relation, we obtain equivalence classes $\R_{\Box}(w) = \{v \ | \ (w,v)\in \R_{\Box} \}$. Furthermore, by condition \textbf{(C1)} we know that $\R_{[i]}$ is an equivalence relation partitioning the equivalence classes of $\R_{\Box}$. We call $\R_{\Box}(w)$ a \textit{moment} and for each $v$ in a moment $\R_{\Box}(w)$, we refer to $\R_{[i]}(v)$ as a \textit{choice-cell} for agent $i$ at moment $\R_{\Box}(w)$. In the following, we shall frequently refer to moments and choices in the above sense. Condition \textbf{(C2)} captures the pivotal \textit{independence of agents} principle for STIT logics, ensuring that at every moment, any combination of different agents'
 choices is consistent: i.e., simultaneous choices are independent (see \cite[7C.4]{BelPerXu01}
 ). 
 \textbf{(C3)} ensures that all agents acting together is a necessary condition for the grand coalition of agents acting.\footnote{In future work, we aim to study condition (C3) strengthened to equality, as in \cite{Lor13}. In such a setting,  
 completeness is 
 obtained by 
 proving that each $\tds$-frame can be transformed into a frame (satisfying the same formulae) with 
  strengthened (C3); 
   hence, showing that the logic does not distinguish between the two frame classes. 
}



The conditions on $\R_{\g}$ and $\R_{\h}$ establish that the frames we consider are irreflexive, temporal orderings of \textit{moments}. 
Properties \textbf{(T4)} and \textbf{(T5)} guarantee that \textit{histories}---i.e., 
 maximally 
  ordered paths of worlds passing through moments---are linear.  
Condition \textbf{(T6)} ensures the STIT principle of \textit{no choice between undivided histories}: if two time-lines remain undivided at the next moment, no agent has a choice that realizes one time-line and excludes the other (see \cite[7C.3]{BelPerXu01}). Consequently, this principle also ensures that the ordering of moments is linearly closed with respect to the past and allows for branching with respect to the future: in other words, $\tds$-frames are \textit{treelike}.\footnote{The main reason why the grand coalition operator $[Ag]$ is added to our language, is because it will allow us to axiomatize the \textit{no choice between undivided histories} principle (see A25 of Definition \ref{def:axioms}). For a discussion of $[Ag]$ we refer to \cite{Lor13}.} 
 Last, \textbf{(T7)} ensures the temporal irreflexivity of moments; i.e., the future excludes 
  the present. 
 For an elaborate discussion of the temporal 
 frame conditions 
  we refer to \cite{Lor13}.

 Last, the criteria \textbf{(D8)}-\textbf{(D11)} guarantee an essentially agentive characterization of the obligation operator $\Oi$ (cf. the impartial `ought to be' operator in~\cite{Hor01}). Condition \textbf{(D8)} ensures that ideal worlds are confined to moments: i.e., 
  the ideal worlds accessible at a moment neither lie in the future nor in the past. 
  \textbf{(D9)} ensures that, for each agent there is at every moment a choice available that is an ideal choice (cf. the corresponding `ought implies can' axiom $A14$). 
 Furthermore, \textbf{(D10)} expresses that, for each agent, if a world is ideal from the perspective of a particular world at a moment, that world is ideal from the perspective of any world at that moment
  : i.e., ideal worlds are settled upon moments. 
 Condition \textbf{(D11)} captures the idea that every ideal world extends to a complete ideal choice: i.e., no choice contains both ideal and non-ideal worlds. Last, note 
that conditions \textbf{(C2)} and \textbf{(D9)} together ensure that every combination of distinct agents' ideal choices is consistent, i.e., non-empty.


\begin{definition}[Semantics for $\lang$ 
]\label{Semantics_udstit} Let $M$ 
 be a $\dtstit$-model and let $w\in W$ of $M$. 
 The \emph{satisfaction} of a formula $\phi\in \lang$ in $M$ at $w$ is 
 defined accordingly: 
\begin{small}
\begin{multicols}{2}
\begin{itemize}
\itemsep=1pt
\item[1.] $ M,w {\models}\ p$ iff $w \in V(p)$
\item[2.] $ M,w {\models}\ \lnot \phi$ iff $M,w {\not\models}\ \phi$
\item[3.] $ M,w {\models} \phi \land \psi$ iff $ M,w {\models} \phi$ and $ M,w {\models} \psi$
\item[4.] $ M,w {\models}\ \Box \phi$ iff $\forall u {\in}\ \R_{\Box}(w)$, $ M,u {\models}\ \phi$
\item[5.] $ M,w {\models}\ [i] \phi$ iff $\forall u {\in}\ \R_{[i]}(w)$, $M,u {\models}\ \phi$
\end{itemize}
\columnbreak
\begin{itemize}

\item[6.] $ M,w {\models}\ \Oi\phi$ iff $\forall u{\in}\ \R_{\Oi}(w)$, $M,u {\models}\ \phi$
\item[7.] $ M,w {\models} [Ag] \phi$ iff $\forall u {\in} \R_{[Ag]}(w), M,u {\models} \phi$
\item[8.] $ M,w {\models}\ \g \phi$ iff $\forall u {\in}\ \R_{\g}(w)$, $ M,u {\models}\ \phi$
\item[9.] $ M,w {\models}\ \h \phi$ iff $\forall u {\in}\ \R_{\h}(w)$, $ M,u {\models}\ \phi$
\end{itemize}
\end{multicols}
\end{small}
\noindent 
Global truth, validity, and semantic entailment are defined as usual (see~\cite{BlaRijVen01}).
\end{definition}

The axiomatization of $\dtstit$ is a composition of 
\cite{Mur04}, together with 
\cite{Lor13}. (Note that in the language $\mathcal{L}_{\tds}$ each agent label represents a distinct 
 agent.) 
\newpage
\begin{definition}[Axiomatization of $\tds$]\label{def:axioms} For each $i\in Ag$ we have,
\begin{small}
\begin{multicols}{2}
\begin{itemize}
\item[A0] All propositional tautologies. 
\item[A1] $\Box (\phi\rightarrow \psi)\rightarrow (\Box \phi \rightarrow \Box \psi)$, 
\item[A2] $\Box\phi\rightarrow\phi$
\item[A3] $\Diam \phi\rightarrow \Box\Diam\phi$ 
\item[A4] $[i](\phi\rightarrow \psi)\rightarrow ([i]\phi\rightarrow [i]\psi)$
\item[A5] $[i]\phi\rightarrow \phi$
\item[A6] $\lb i \rb \phi \rightarrow [i]\lb i \rb\phi$ 
\item[A7] $[Ag](\phi\rightarrow \psi)\rightarrow ([Ag]\phi\rightarrow [Ag]\psi)$
\item[A8] $[Ag] \phi \rightarrow \phi$
\item[A9] $\lb Ag \rb \phi \rightarrow [Ag] \lb Ag \rb \phi$
\item[A10] $\bigwedge_{0\leq i \leq n} \Diam [i] \phi_k \rightarrow \Diam \bigwedge_{0\leq i \leq n} [i]\phi_k$
\item[A11] $\bigwedge_{1 \leq i \leq n} [i] \phi_{i} \rightarrow [Ag] \bigwedge_{1 \leq i \leq n} \phi_{i}$
\item[A12] $\Oi (\phi\rightarrow\psi) \rightarrow (\Oi\phi\rightarrow \Oi\psi)$ 
\item[A13] $\Box \phi \rightarrow ([i]\phi\land \Oi\phi)$ 
\item[A14] $\Oi \phi\rightarrow \Diam [i]\phi$
\item[A15] $\Diam \Oi \phi \rightarrow \Box\Oi\phi$ 
\item[A16] $\Box([i]\phi \rightarrow [i]\psi)\rightarrow (\Oi\phi\rightarrow \Oi\psi)$ 
\item[A17] $\g (\phi \rightarrow \psi) \rightarrow (\g \phi \rightarrow \g \psi)$
\item[A18] $\g \phi \rightarrow \g\g \phi$
\item[A19] $\g \phi \rightarrow \f \phi$
\item[A20] $\h (\phi \rightarrow \psi) \rightarrow (\h \phi \rightarrow \h \psi)$
\item[A21] $\phi \rightarrow \g \p \phi$
\item[A22] $\phi \rightarrow \h \f \phi$
\item[A23] $\f \p \phi \rightarrow \p \phi \vee \phi \vee \f \phi$
\item[A24] $\p \f \phi \rightarrow \p \phi \vee \phi \vee \f \phi$
\item[A25] $\f \Diamond \phi \rightarrow \lb Ag \rb \f \phi$
\item[R0] 
  ${\vdash_{\tds}} (\psi{\rightarrow}\phi)$ and ${\vdash_{\tds}}\psi$ implies 
   $\vdash_{\tds} \phi$
\item[R1] 
  ${\vdash_{\tds}} \phi$ implies 
   ${\vdash_{\tds}}[\alpha]\phi$, 
  $[\alpha]{\in}\{\Box,\g,\h\}$ 
\item[R2] 
 $\vdash_{\tds} (\Box \lnot p\land \Box(\g p \land \h p))\rightarrow \phi$ implies $\vdash_{\tds} \phi$, given $p\not\in\phi$
\end{itemize}
\end{multicols}
\end{small}
\noindent
A derivation of $\phi$ in $\dtstit$ from a set $\Gamma$, written $\Gamma \der \phi$, is defined in the usual way (See~\cite[Def. 4.4]{BlaRijVen01}). When $\Gamma {=} \emptyset$, we say $\phi$ is a \emph{theorem}, and write $\der \phi$.

\end{definition}

The axioms, $A1{-}A3$, $A4{-}A6$ and $A7{-}9$ express the S5 behavior of $\Box$, $[i]$ (for each $i{\in} Ag$) and $[Ag]$, respectively. $A10$ is the \textit{independence of agents} axiom. $A11$ captures that `all agents acting together implies the grand coalition of agents acting'. 
 $A13$ is a bridge axiom linking $\Oi$ to $\Box$ and $[i]$ to $\Box$ (cf. (C1) and (D8) of Definition \ref{models_dtstit}). $A14$ corresponds to the `ought implies can' principle (cf. (D9) of Definition \ref{models_dtstit}). $A15$ ensures that, when possible, obligatory choices are settled upon moments (cf. (D10) of Definition \ref{models_dtstit}). $A16$ can be understood as a conditional monotonicity principle for ideal choices (cf. (D11) of Definition \ref{models_dtstit}). 
 Axioms $A12$ and $A13$, together with the necessitation rule $R1$, ensure that $\Oi$ is a normal modal operator.

With respect to the temporal axioms, $A17{-}A19$ capture the KD4 behavior of $\g$, whereas, axioms $A21$ and $A22$ ensure that $\h$ is the converse of $\g$. $A23$ and $A24$ capture \textit{connectedness} of histories through moments and 
$A25$ characterizes \textit{no choice between undivided histories}. Last, 
 $R2$ is a 
  variation of Gabbay's irreflex\-ivity rule (the proofs of Theorem \ref{thm:soundness} and~\ref{lm:completeness} give an indication of 
 the rule's functions). 

\section{Soundness and Completeness of $\dtstit$}\label{section2b}

In this section, we prove that $\dtstit$ is sound and complete relative to the class of $\dtstit$-frames. In the next section, we show how such frames are 
 transformable into frames employing utility assignments. This allows one to model and reason about utilitarian scenarios in a more fine-grained manner, while obtaining completeness of the logic without commitment to particular utility functions.


Unless stated otherwise, 
 all proofs in this section can be found in App.~\ref{appendix}. 

\begin{theorem}\textsc{(soundness of $\dtstit$)}\label{thm:soundness} $\forall \phi\in\mathcal{L}_{\dtstit}$, $\vdash_{\dtstit}\phi$ implies $\models \phi$. 
\end{theorem}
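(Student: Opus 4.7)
The proof proceeds by induction on the length of a derivation of $\phi$ in $\dtstit$. The induction hypothesis is that every formula derivable in fewer steps is valid on all $\dtstit$-frames. For the base case, I would argue that every axiom A0--A25 is valid on the class of $\dtstit$-frames; for the inductive step, I would argue that each inference rule R0--R2 preserves validity.

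The validation of the axioms splits naturally into three blocks matching the labelling of the frame conditions. For the choice axioms: A0--A2, A4--A5 and A7--A8 are just standard $K$ and $T$ schemas, instances of the fact that $\R_{\Box}$, $\R_{[i]}$ and $\R_{[Ag]}$ are reflexive; A3, A6 and A9 are the standard modal 5-axioms obtained from these being equivalence relations; A10 is the independence-of-agents axiom whose validity follows from \textbf{(C2)} by the usual contrapositive argument (pick witnesses $u_i \in \R_{\Box}(w)$ with $u_i \models [i]\phi_i$, then any world in $\bigcap_i \R_{[i]}(u_i)$ witnesses $\Diam \bigwedge_i [i]\phi_i$); A11 follows from \textbf{(C3)}; A13 follows from \textbf{(C1)} and \textbf{(D8)}. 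For the deontic axioms A12--A16, A12 is the $K$-schema for the normal modality $\Oi$, while A14, A15 and A16 correspond directly to \textbf{(D9)}, \textbf{(D10)} and \textbf{(D11)} respectively, in each case by translating the first-order condition into a pointwise semantic argument. For the temporal axioms: A17--A19 express transitivity and seriality of $\R_{\g}$; A20--A22 follow from $\R_{\h}$ being the converse of $\R_{\g}$; A23--A24 are the standard connectedness formulas for \textbf{(T4)} and \textbf{(T5)}; and A25 is exactly the semantic content of \textbf{(T6)}, since $\R_{\g} \circ \R_{\Box} \subseteq \R_{[Ag]} \circ \R_{\g}$ means that any future settled $\phi$-world can be reached via a $[Ag]$-step followed by a future step.

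For the inference rules, R0 (modus ponens) and R1 (necessitation for $\Box,\g,\h$) are standard and preserve validity on any class of frames. The real obstacle is R2, the Gabbay-style irreflexivity rule: from $\der (\Box \lnot p \land \Box(\g p \land \h p)) \to \phi$ with $p \notin \phi$, conclude $\der \phi$. My plan here is to argue contrapositively: suppose $\phi$ is not valid, so there is a $\dtstit$-model $M = (F,V)$ and a world $w$ with $M,w \not\models \phi$. Since $p$ does not occur in $\phi$, I may freely redefine $V(p)$ without affecting the truth of $\phi$ at $w$. I would set
\[
V'(p) := \bigcup_{v \in \R_{\Box}(w)} \bigl(\R_{\g}(v) \cup \R_{\h}(v)\bigr),
\]
and write $M'$ for the resulting model. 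The key verification is that $M',w \models \Box \lnot p \land \Box(\g p \land \h p)$. The conjunct $\Box(\g p \land \h p)$ holds by construction of $V'(p)$. The conjunct $\Box \lnot p$ requires showing that $V'(p) \cap \R_{\Box}(w) = \emptyset$, i.e.\ that no world in the moment $\R_{\Box}(w)$ lies in the $\R_{\g}$- or $\R_{\h}$-image of another world in $\R_{\Box}(w)$. This is precisely where \textbf{(T7)} is used, together with \textbf{(T6)} and the fact that $\R_{\h}$ is the converse of $\R_{\g}$: if some $u \in \R_{\Box}(w)$ satisfied $\R_{\g}v u$ for some $v \in \R_{\Box}(w)$, then applying \textbf{(T6)} to $\R_{\g}vu$ and $\R_{\Box}uw$ would yield a future successor of $w$ inside $\R_{\Box}(w)$, contradicting \textbf{(T7)}; the $\R_{\h}$-case is symmetric. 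Hence $M',w \models (\Box \lnot p \land \Box(\g p \land \h p)) \land \lnot \phi$, contradicting the assumed validity of the premise of R2. This shows R2 preserves validity and closes the induction.
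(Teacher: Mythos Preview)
Your proposal is correct and follows the same overall strategy as the paper: induction on derivations, checking each axiom against the corresponding frame condition, and handling R2 by contraposition via a redefinition of $V(p)$. The one noteworthy difference is in the R2 argument. The paper simply sets $V'(p) := W \setminus \R_{\Box}(w)$, which makes $\Box\lnot p$ immediate and reduces $\Box(\g p \land \h p)$ to a single appeal to \textbf{(T7)} (using that $\R_{\Box}$ is an equivalence relation). Your more targeted choice $V'(p) := \bigcup_{v \in \R_{\Box}(w)}(\R_{\g}(v) \cup \R_{\h}(v))$ also works, but your verification of $\Box\lnot p$ takes an unnecessary detour through \textbf{(T6)}: if $u,v \in \R_{\Box}(w)$ and $\R_{\g}vu$, then $u \in \R_{\Box}(v)$ by the equivalence property, so \textbf{(T7)} applied at $v$ directly yields $u \notin \R_{\g}(v)$, a contradiction; the $\R_{\h}$-case is symmetric. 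So \textbf{(T6)} is not needed here, and the paper's valuation choice makes the whole verification shorter.
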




 We prove completeness by constructing maximal consistent sets belonging to a special class and build a canonical $\tds$ model adopting methods from \cite{GabHodRey94,Lor13}.


\begin{definition}
\label{def:MCS} A set of formulae 
 $\Gamma\subseteq\mathcal{L}_{\tds}$ is a 
\emph{maximally consistent set (MCS)} 
 iff 
 (i) $\Gamma \not\vdash_{\dtstit} \bot$, and (ii) for any set $\Gamma' \subseteq \mathcal{L}_{\dtstit}$, if $\Gamma \subset \Gamma'$, then $\Gamma' \der \bot$.

\end{definition}

\begin{definition} \textsc{(canonical model for $\dtstit$)} Let $[\alpha]\in \mathsf{Boxes}$ and let $\langle\alpha\rangle$ be the operator dual to $[\alpha]$. We define the \emph{canonical model} to be the tuple $M^{dt} := (W^{dt}, \R^{dt}_{\Box}, \{\R^{dt}_{[i]} \ | \ i \in Ag\}, \R^{dt}_{[Ag]},$ $\R^{dt}_{\g}, \R^{dt}_{\h}, \{\R^{dt}_{\Oi} \ | \ i\in Ag\}, V^{dt})$ such that:
\begin{itemize}
\item $W^{dt} := \{\Gamma \subset \mathcal{L}_{\dtstit} \ | \ \Gamma \text{ is an MCS}\}$;

\item for all $\Gamma,\Delta\in W^{dt}$, $(\Gamma,\Delta) \in \R^{dt}_{[\alpha]}$ iff for all $\phi\in \lang$, if $[\alpha]\phi \in \Gamma$ , then $\phi \in \Delta$ (for each $[\alpha] \in \mathsf{Boxes}$);

\item $V^{dt}$ is a valuation function s.t. $\forall p\in Atom$, $V^{dt}(p) := \{\Delta \in W^{dt} \ | \ p \in \Delta\}$.
\end{itemize}

\end{definition}


\begin{definition}\textsc{(diamond saturated set~\cite{Lor13})} Let $X$ be a set of MCSs 
 and let $\langle \alpha \rangle$ be dual to $[\alpha] \in \mathsf{Boxes}$. We say that $X$ is a \emph{diamond saturated set} iff for all $\Gamma{\in} X$, for each $\langle\alpha\rangle\phi\ {\in}\ \Gamma$ there exists a $\Delta\ {\in}\ X$ such that $\R_{[\alpha]}\Gamma\Delta$ and~$\phi{\in} \Delta$.
 

\end{definition}


In order to ensure that our canonical model will be irreflexive, we introduce a mechanism that allows us to encode 
MCSs 
 with information that impedes 
 reflexive points in the model. 
 We call these encoded 
  sets IRR-theories and restrict our canonical model 
  to 
  consist of these sets 
   only. Last, we %
   use the notation $M|_{X}$ to indicate a model $M$ whose domain is restricted to the set $X$ (see~\cite[Ch.6]{GabHodRey94}). 

\begin{lemma}\label{lm:non-irr-truth-lemma} Let $X$ be a diamond saturated set with $\Gamma \in X$, $\phi \in \lang$, and let $M^{dt} |_{X}$ be the canonical model restricted to $X$. Then, $M^{dt} |_{X}, \Gamma \models \phi$ iff $\phi \in \Gamma$.

\end{lemma}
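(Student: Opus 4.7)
The plan is a standard truth lemma by induction on the structure of $\phi$, with the diamond saturation of $X$ doing the essential work in the existential modal cases. The base case $\phi = p$ is immediate from the definition of $V^{dt}$, since restricting the valuation to $X$ does not affect whether $\Gamma \in V^{dt}(p)$. The Boolean cases $\lnot\psi$ and $\psi \wedge \chi$ follow in the usual way from the fact that every $\Gamma \in X$ is a maximally consistent set and therefore closed under classical propositional reasoning: $\lnot\psi \in \Gamma$ iff $\psi \notin \Gamma$, and $\psi \wedge \chi \in \Gamma$ iff both $\psi \in \Gamma$ and $\chi \in \Gamma$.

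For the inductive step covering the modal operators, I would uniformly treat each $[\alpha] \in \mathsf{Boxes}$, since the semantic clauses and the canonical relations are defined in the same schematic way. For the left-to-right direction, assume $M^{dt}|_X, \Gamma \models [\alpha]\psi$ and suppose for contradiction that $[\alpha]\psi \notin \Gamma$. Then, by maximal consistency, $\langle \alpha \rangle \lnot \psi \in \Gamma$, and by diamond saturation of $X$ there exists $\Delta \in X$ with $(\Gamma,\Delta) \in \R^{dt}_{[\alpha]}$ and $\lnot \psi \in \Delta$. Applying the induction hypothesis to $\Delta$ yields $M^{dt}|_X, \Delta \models \lnot\psi$, contradicting the assumption. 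For the right-to-left direction, assume $[\alpha]\psi \in \Gamma$ and let $\Delta \in X$ with $(\Gamma,\Delta) \in \R^{dt}_{[\alpha]}$. By the canonical definition of $\R^{dt}_{[\alpha]}$ we immediately obtain $\psi \in \Delta$, and the induction hypothesis gives $M^{dt}|_X, \Delta \models \psi$; since this holds for every $\R^{dt}_{[\alpha]}$-successor of $\Gamma$ lying in $X$, we get $M^{dt}|_X, \Gamma \models [\alpha]\psi$.

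The main (and really only) obstacle is the existential direction of the modal case: without restricting to $X$, an ordinary Lindenbaum-style construction would produce a witness $\Delta \in W^{dt}$, but the restricted model only sees the worlds in $X$. This is precisely the role of the diamond saturation hypothesis, which guarantees that every $\langle\alpha\rangle$-formula in any $\Gamma \in X$ already has a witness \emph{inside} $X$. Once this is in place, every step reduces to routine manipulations of MCSs and the canonical accessibility definitions, so the proof is essentially bookkeeping across the modalities in $\mathsf{Boxes}$. Note that no frame properties (the conditions \textbf{(C1)}--\textbf{(D11)}) are invoked here; those will be needed separately to show that the diamond saturated set $X$ produced in the completeness construction actually carries a $\dtstit$-frame structure.
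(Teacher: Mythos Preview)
Your proposal is correct and is exactly the standard induction-on-$\phi$ argument the paper has in mind; the paper's own proof merely says ``Proven in the usual manner by induction on $\phi$'' with a reference to \cite[Lem.~4.70]{BlaRijVen01}. Your identification of diamond saturation as the device that supplies in-$X$ witnesses for the $\langle\alpha\rangle$-cases is precisely the point of that hypothesis, and your remark that no frame conditions are used here is also accurate.
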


\begin{proof} Proven in the usual manner by induction on $\phi$ (see~\cite[Lem.~4.70]{BlaRijVen01}).

\end{proof}

Following \cite{Lor13}, we let IRR-theories  
 be those sets of $\tds$ formulae 
  that (i) are maximally consistent, (ii) contain a label 
  $name(p) := \Box \lnot p \land \Box(\g p \land \h p)$, uniquely
  labeling a \textit{moment} and (iii) for any world that is 
reachable through any `zig-zagging' sequence of diamond operators, that is, 
 every zig-zagging formula $\phi$ of the form, 
$$\langle \alpha_1 \rangle(\phi_1\land \langle \alpha_2 \rangle(\phi_2\land ...\land \langle \alpha_n \rangle \phi_n))...)$$
where $\langle \alpha_i \rangle$ is dual to $[\alpha_{i}] \in \mathsf{Boxes}$ with $1 \leq i \leq n$, there exists a corresponding zig-zagging formula $\phi(q)$ (where $q$ is a propositional variable) of the form, 
$$\langle \alpha_1 \rangle(\phi_1\land \langle \alpha_2 \rangle(\phi_2\land ...\land \langle \alpha_n \rangle (\phi_n \land \Box \lnot q \land \Box(\g q\land \h q)))...)$$
labeling 
 reachable worlds. 
  Let us make the above formally
  precise:  

\begin{definition}\textsc{(irr-theory)~\cite{Lor13}}\label{def:irr-theory} Let $\mathsf{Zig}$ be the set of all zig-zagging formulae in $\lang$ and let \emph{name(p)}$ := \Box \neg p \land \Box (\g p \land \h p)$ where $p$ is a propositional variable. 
 A set of formulae $\Gamma$ is called an \emph{IRR-theory} iff the following hold: 
\begin{itemize}

\item $\Gamma$ is a MCS and $name(p) \in \Gamma$, for some propositional variable $p$;

\item if $\phi \in \Gamma \cap \mathsf{Zig}$, then $\phi(q)\in\Gamma$, for some propositional variable $q$.

\end{itemize}
Henceforth, we refer to $\mathsf{IRR}$ as the 
 set of all IRR-theories in $\lang$.
\end{definition} 

\noindent We now present lemmata relevant to the use of 
IRR-theories in 
 canonical models.

\begin{lemma}\label{lm:consistent-formula-in-irr-theory} Let $\phi \in \lang$ be a consistent formula. Then, there exists an IRR-theory $\Gamma$ such that $\phi \in \Gamma$.
\end{lemma}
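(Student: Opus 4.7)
The plan is to construct an IRR-theory containing $\phi$ via a Lindenbaum-style enumeration, with a simultaneous mechanism to insert fresh ``name'' labels at every depth required by the zig-zagging condition. The backbone of all consistency arguments will be the irreflexivity rule $R2$, used in its contrapositive form: if $\chi$ is consistent and $p$ does not occur in $\chi$, then $\chi \land name(p)$ is also consistent (else $\vdash name(p)\to\neg\chi$, so by $R2$ applied with $\phi:=\neg\chi$, we would get $\vdash\neg\chi$, contradicting consistency of $\chi$).

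\textbf{Step 1 (Base).} Since $\phi$ is consistent and $Var$ is countably infinite while $\phi$ mentions only finitely many variables, choose $p_{0}\in Var$ not occurring in $\phi$. By the observation above, $\Gamma_{0} := \{\phi, name(p_{0})\}$ is consistent, and $name(p_{0}) \in \Gamma_{0}$.

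\textbf{Step 2 (Inductive construction).} Fix an enumeration $\psi_{0},\psi_{1},\ldots$ of all formulae in $\lang$. Having built a consistent $\Gamma_{k}$ that uses only finitely many variables, I would form $\Gamma_{k+1}$ in two substeps. First, set $\Gamma_{k}' := \Gamma_{k} \cup \{\psi_{k}\}$ if this is consistent, else $\Gamma_{k}' := \Gamma_{k} \cup \{\neg\psi_{k}\}$ (standard Lindenbaum move). Second, if $\psi_{k}\in \Gamma_{k}'\cap \mathsf{Zig}$, then pick a propositional variable $q_{k}$ not occurring in any formula of $\Gamma_{k}'$ (possible since $\Gamma_{k}'$ uses only finitely many variables so far) and set $\Gamma_{k+1} := \Gamma_{k}' \cup \{\psi_{k}(q_{k})\}$; otherwise $\Gamma_{k+1} := \Gamma_{k}'$.

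\textbf{Step 3 (Consistency of the labelled extension).} This is the main obstacle. I need to show that whenever $\psi_{k} = \lb\alpha_{1}\rb(\phi_{1}\land\lb\alpha_{2}\rb(\phi_{2}\land\cdots\lb\alpha_{n}\rb\phi_{n})\cdots)$ is already in the consistent set $\Gamma_{k}'$, then $\Gamma_{k}' \cup \{\psi_{k}(q_{k})\}$ is still consistent. Suppose it were not; then a finite conjunction $\chi$ of members of $\Gamma_{k}'$ satisfies $\vdash \chi \to \neg\psi_{k}(q_{k})$. Unpacking the definition of $\psi_{k}(q_{k})$, this is $\vdash \chi \to [\alpha_{1}](\phi_{1}\to[\alpha_{2}](\phi_{2}\to\cdots[\alpha_{n}](\phi_{n}\to\neg name(q_{k})))\cdots)$. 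Because $q_{k}$ does not occur in $\chi$ nor in any $\phi_{i}$, an induction on $n$ (peeling one normal box $[\alpha_{i}]$ at a time and applying $R2$ at the innermost stage to eliminate $name(q_{k})$, using that the $[\alpha_{i}]$ are normal via $R1$ and $A1/A4/A7/A17/A20$) yields $\vdash \chi \to \neg\psi_{k}$. This contradicts $\psi_{k}\in\Gamma_{k}'$ together with consistency of $\Gamma_{k}'$.

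\textbf{Step 4 (Taking the union).} Set $\Gamma := \bigcup_{k\geq 0}\Gamma_{k}$. Consistency is preserved under unions of chains of consistent sets, and for every formula $\chi\in\lang$ either $\chi$ or $\neg\chi$ lies in $\Gamma$ by Step 2, so $\Gamma$ is maximally consistent. Moreover $name(p_{0})\in\Gamma$, and for every $\psi\in \Gamma \cap \mathsf{Zig}$ the enumeration eventually reaches $\psi$ and Step 2 inserted some $\psi(q)\in\Gamma$. Hence $\Gamma$ is an IRR-theory containing $\phi$. The subtle part of the whole argument is the peeling induction in Step 3; everything else follows the standard Lindenbaum pattern.
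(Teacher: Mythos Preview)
Your overall strategy matches the paper's exactly: a Lindenbaum construction seeded with $\phi$ together with a fresh $name(p_0)$, interleaving the usual maximality steps with the insertion of labelled zig-zag witnesses $\psi_k(q_k)$. Your base case (Step 1) is handled just as in the paper, via the contrapositive of $R2$. The paper's own proof is in fact sketchier than yours at the crucial point: for the inductive step it simply asserts that consistency of $\Gamma_{n+1}$ ``trivially follows by the definition'', which is no argument at all in the odd case where $\psi_n(q)$ is added.

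Your Step 3, however, contains a real gap. The rule $R2$ demands a theorem of the shape $\vdash name(q)\to\theta$ at the \emph{top level}. From
\[
\vdash \chi \to [\alpha_1]\bigl(\phi_1 \to \cdots [\alpha_n](name(q_k)\to\neg\phi_n)\cdots\bigr)
\]
you never isolate $\vdash name(q_k)\to\neg\phi_n$ as a standalone theorem. Normality (necessitation $R1$ plus the K-axioms you cite) lets you push boxes \emph{inward} and distribute over implication; it does not let you ``peel'' a box outward to expose the inner implication --- that would amount to a converse of necessitation, which fails for $\g$, $\h$, and $\Oi$. Nor can you rescue the argument by applying $R2$ at the outer level: that would require $\vdash name(q_k)\land\chi \to [\alpha_1]\cdots[\alpha_n]\neg\phi_n$, hence essentially $name(q_k)\to[\alpha_1]\cdots[\alpha_n]name(q_k)$, but already $name(q)\to\g\,name(q)$ is invalid (if $name(q)$ holds at $w$ then $q$ is true at every future moment, so $\Box\neg q$ fails there). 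The correct treatment of this step is genuinely delicate and is what the paper is tacitly importing from Gabbay--Hodkinson--Reynolds and from Lorini; you should consult those sources rather than rely on the peeling sketch.
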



\begin{lemma}\textsc{(existence lemma)}\label{lm:existence-lemma} Let $\Gamma$ be an IRR-theory and let $\langle \alpha \rangle$ be dual to $[\alpha] \in \mathsf{Boxes}$. For each 
 $\langle \alpha \rangle \phi \in \Gamma$ there exists an IRR-theory $\Delta$ such that $\R_{[\alpha]}\Gamma\Delta$.
\end{lemma}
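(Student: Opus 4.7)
The plan is to prove this by a Lindenbaum-style construction adapted to the IRR-theory setting, following the methods of~\cite{GabHodRey94,Lor13} cited in the paper. First I would form the candidate set $\Sigma_0 := \{\phi\} \cup \{\psi \mid [\alpha]\psi \in \Gamma\}$ and verify that it is $\dtstit$-consistent. Otherwise, compactness would yield a finite inconsistency $\der (\psi_1 \land \cdots \land \psi_k) \to \neg\phi$ with each $[\alpha]\psi_j \in \Gamma$; applying necessitation together with $K$-distribution for $[\alpha]$ would give $[\alpha]\neg\phi \in \Gamma$, contradicting $\langle\alpha\rangle\phi \in \Gamma$. The needed necessitation/normality is available for every $[\alpha] \in \mathsf{Boxes}$: directly via $R1$ together with $A1, A17, A20$ for the universal modalities $\Box, \g, \h$, and, for the agentive and deontic modalities $[i], [Ag], \Oi$, via the bridge axiom $A13$ together with the $K$-axioms $A4, A7, A12$.

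Next I would enumerate $\lang$ as $\chi_1, \chi_2, \ldots$ and build an ascending chain $\Sigma_0 \subseteq \Sigma_1 \subseteq \cdots$ of consistent sets. At the initial stage I would pick a propositional variable $q_0$ not occurring in $\Sigma_0$ and adjoin $\Box\lnot q_0 \land \Box(\g q_0 \land \h q_0)$, i.e.\ $\mathit{name}(q_0)$; consistency of this extension follows by a contrapositive application of rule $R2$, since if $\Sigma_0 \cup \{\mathit{name}(q_0)\}$ were inconsistent for every fresh $q_0$, then $\der \mathit{name}(q_0) \to \neg\bigwedge \Sigma_0'$ for some finite $\Sigma_0' \subseteq \Sigma_0$ with $q_0 \notin \Sigma_0'$, and $R2$ would give $\der \neg\bigwedge \Sigma_0'$, contradicting the consistency of $\Sigma_0$. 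At each subsequent stage I would decide between $\chi_n$ and $\neg\chi_n$ in the standard way; whenever the added $\chi_n$ is a zig-zag formula, I would simultaneously adjoin a witnessed form $\chi_n(q)$ using a propositional variable $q$ not occurring in anything added so far, appealing to an iterated form of $R2$ to preserve consistency. The limit $\Delta := \bigcup_n \Sigma_n$ is then maximally consistent, contains $\mathit{name}(q_0)$, and contains a fresh witness for every zig-zag formula it contains, so $\Delta \in \IRR$. Since $\{\psi \mid [\alpha]\psi \in \Gamma\} \subseteq \Sigma_0 \subseteq \Delta$, the definition of the canonical relation gives $\R^{dt}_{[\alpha]}\Gamma\Delta$ at once (and $\phi \in \Delta$ as a bonus).

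The main obstacle is the witnessing step for zig-zag formulas at arbitrary nesting depth: one has to produce, at each stage, a propositional variable $q$ fresh enough that adjoining $\chi_n(q)$ keeps the set consistent, even though $\chi_n(q)$ plants the label $\mathit{name}(q)$ deep inside a chain of nested diamonds. The argument threads through $R2$ applied at the innermost diamond and then propagates the freshness outward by repeated use of normality for each operator in $\mathsf{Boxes}$. This combinatorial step is delicate but is precisely the purpose of the irreflexivity rule $R2$, and it transfers from the detailed treatments in~\cite[Ch.~6]{GabHodRey94} and~\cite{Lor13} once one verifies that every modality in the present signature---in particular the deontic modalities $\Oi$ and the grand-coalition $[Ag]$---satisfies the normality properties invoked during propagation, which is immediate from the axioms $A1$--$A25$.
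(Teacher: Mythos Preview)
Your overall architecture is right: form $\Sigma_{0} = \{\phi\} \cup \{\psi \mid [\alpha]\psi \in \Gamma\}$, check its consistency via normality of $[\alpha]$, then extend to an IRR-theory by a Lindenbaum construction that inserts a name and zig-zag witnesses. The consistency argument for $\Sigma_{0}$ is fine. The gap is in how you justify the witnessing steps. You propose to obtain both the initial $name(q_{0})$ and the nested $\chi_{n}(q)$ by ``an iterated form of $R2$'' applied ``at the innermost diamond'' and then ``propagated outward by normality''. But $R2$ is a top-level rule: from $\der name(p) \to \psi$ with $p \notin \psi$ infer $\der \psi$. What your propagation would need is the boxed form ``from $\der \chi \to [\alpha_{1}]\cdots[\alpha_{k}](name(q) \to \neg\theta)$ infer $\der \chi \to [\alpha_{1}]\cdots[\alpha_{k}]\neg\theta$'', and normality does not deliver that---you cannot pull a fresh antecedent out through a box. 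A separate snag: $\Sigma_{0}$ already mentions every propositional variable (e.g.\ $p \lor \neg p \in \Sigma_{0}$ for each $p$, since $[\alpha](p\lor\neg p)$ is a theorem and hence in $\Gamma$), so there is no variable literally fresh for all of $\Sigma_{0}$, and your top-level $R2$ step for $name(q_{0})$ needs more care than stated.

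The device you are not using---and the one the paper (following~\cite{Lor13}) actually relies on---is the IRR-property of $\Gamma$ itself. Since $\langle\alpha\rangle\phi \in \Gamma \cap \mathsf{Zig}$, the second clause of Definition~\ref{def:irr-theory} already supplies some $q$ with $\langle\alpha\rangle(\phi \land name(q)) \in \Gamma$; seeding $\Sigma_{0}$ with $\phi \land name(q)$ gives the initial name for free. For the inductive zig-zag witnessing one does not invoke $R2$ at all: instead one pulls the finite stage back into $\Gamma$ by showing $\langle\alpha\rangle\big(name(q) \land \bigwedge_{\chi \in \Sigma_{n}\setminus\Sigma_{0}}\chi \land \psi_{n}\big) \in \Gamma$, observes that this is again a zig-zag formula, and lets the IRR-property of $\Gamma$ produce a $q'$ with the corresponding $\psi_{n}(q')$-version in $\Gamma$; pushing forward then yields consistency of $\Sigma_{n} \cup \{\psi_{n},\psi_{n}(q')\}$. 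This is exactly the pattern worked out in detail in the paper's proof of Lemma~\ref{lm:D9-property}. The rule $R2$ is what manufactures the \emph{first} IRR-theory (Lemma~\ref{lm:consistent-formula-in-irr-theory}); once one IRR-theory is in hand, all subsequent witnesses are harvested from it rather than from fresh appeals to $R2$.
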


Subsequently, it must be shown that the canonical model \textit{restricted} to the set $\IRR$ of IRR-theories (i.e., $M^{dt} |_{\mathsf{IRR}}$) is in fact a $\tds$ model (henceforth, we use $W^{dt}$ and $\IRR$ interchangeably). First, we provide 
 lemmata ensuring that the model satisfies the desired temporal and deontic properties of Definition \ref{models_dtstit}. The first two follow from \cite{Lor13} and the 
 latter four results are proven in App.~\ref{appendix}. 


\begin{lemma}[\cite{Lor13}]\textsc{(property (C2))} Let $\Gamma_{1},...,\Gamma_{n} \in \IRR$ 
 such that $\R^{dt}_{\Box}\Gamma_{i}\Gamma_{j}$ 
  for all $1 \leq i,j \leq n$. Then, there exists a 
 $\Delta\in \IRR$ such that $\R^{dt}_{1}\Gamma_{1}\Delta,...,\R^{dt}_{n}\Gamma_{n}\Delta$.

\end{lemma}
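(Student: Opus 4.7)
The plan is to find the required witness $\Delta$ by extending the set
\[
\Sigma \;:=\; \{\phi \in \lang : [i]\phi \in \Gamma_i \text{ for some } 1 \leq i \leq n\}
\]
to an IRR-theory. Once $\Sigma \subseteq \Delta \in \IRR$, the canonical definition of $\R^{dt}_{[i]}$ immediately yields $\R^{dt}_{[i]} \Gamma_i \Delta$ for each $1 \leq i \leq n$, which is the desired conclusion.

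The substantive step is to establish that $\Sigma$ is $\dtstit$-consistent. I would argue by contradiction: if $\Sigma$ were inconsistent, then there would exist finite subsets $\{\phi_i^1,\ldots,\phi_i^{k_i}\} \subseteq \{\phi : [i]\phi \in \Gamma_i\}$ for each $1 \leq i \leq n$ with $\der \bigwedge_{i=1}^n \psi_i \to \bot$, where $\psi_i := \bigwedge_{j=1}^{k_i} \phi_i^j$ (with empty conjunction interpreted as $\top$). The $S5$-axioms $A4$--$A6$ ensure that $[i]$ is normal and distributes over finite conjunctions, so $[i]\psi_i \in \Gamma_i$ for every $i$. Since $\R^{dt}_{\Box}\Gamma_1 \Gamma_i$, the contrapositive reading of the canonical $\Box$-relation gives $\Diam [i]\psi_i \in \Gamma_1$, whence $\bigwedge_{i=1}^n \Diam [i]\psi_i \in \Gamma_1$. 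Invoking the independence-of-agents axiom $A10$ inside $\Gamma_1$ yields $\Diam \bigwedge_{i=1}^n [i]\psi_i \in \Gamma_1$, so $\bigwedge_i [i]\psi_i$ is consistent. Applying axiom $A5$ ($[i]\psi_i \to \psi_i$) transports consistency to $\bigwedge_i \psi_i$, contradicting the assumption. Note that this argument is the canonical-model mirror of the frame-theoretic content of (C2): axiom $A10$ is what makes simultaneous choices combinable.

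The step I expect to be the main obstacle is the final extension of the consistent set $\Sigma$ to an IRR-theory. Lemma \ref{lm:consistent-formula-in-irr-theory} only handles single consistent formulas, whereas $\Sigma$ is generally infinite, so a direct appeal is insufficient. My plan is to adapt the enumeration-based Lindenbaum construction of \cite{GabHodRey94,Lor13}: fix an enumeration $\chi_1,\chi_2,\ldots$ of $\lang$, initialize with $\Sigma_0 := \Sigma \cup \{name(p_0)\}$ for some propositional variable $p_0$ fresh to $\Sigma$, and at each stage either add $\chi_{k+1}$ (if $\Sigma_k \cup \{\chi_{k+1}\}$ is consistent) or else its negation; whenever a zig-zagging formula is added, insert additionally a witness $\chi_{k+1}(q)$ for a variable $q$ fresh to everything included so far. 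Consistency preservation at each stage rests on the irreflexivity rule $R2$, which is tailored precisely to sanction the introduction of such fresh $name$-style witnesses: any threatened inconsistency of the form $\der (\bigwedge \Sigma_k \land name(q)) \to \bot$ with $q$ fresh collapses via $R2$ to an inconsistency of $\Sigma_k$ itself. The union $\Delta = \bigcup_k \Sigma_k$ is then the required IRR-theory extending $\Sigma$, completing the proof.
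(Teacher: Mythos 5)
Your skeleton is the standard one (and essentially that of \cite{Lor13}, to which the paper defers for this lemma): extend $\Sigma:=\{\phi \mid [i]\phi\in\Gamma_i \text{ for some } 1\leq i\leq n\}$ to an IRR-theory $\Delta$, so that the canonical definition of $\R^{dt}_{[i]}$ immediately gives $\R^{dt}_{[i]}\Gamma_i\Delta$ for each $i$; and your consistency argument for $\Sigma$ via the diamond characterization of $\R^{dt}_{\Box}$ and axiom A10 is correct. The problems are concentrated in the final extension step. A first, repairable one: no propositional variable is fresh to $\Sigma$, because $\Gamma_1$ is an MCS and so contains the theorem $[1](q\lor\lnot q)$ for \emph{every} variable $q$, putting $q\lor\lnot q$ into $\Sigma$; your initialization $\Sigma_0:=\Sigma\cup\{name(p_0)\}$ with $p_0$ fresh to $\Sigma$ cannot be carried out as stated. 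It is also unnecessary: if $name(p)\in\Gamma_1$, then $\Box name(p)\in\Gamma_1$ (the paper uses $\der name(p)\to\Box name(p)$), hence $[1]name(p)\in\Gamma_1$ by A13, so $name(p)\in\Sigma$ already and the first IRR condition is automatic for any MCS extending $\Sigma$.

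The substantive gap is the justification of the zig-zag witnesses. When $\chi_{k+1}\in\mathsf{Zig}$ is added, the threatened inconsistency of $\Sigma_k\cup\{\chi_{k+1},\chi_{k+1}(q)\}$ takes the form $\der\bigwedge F\to\lnot\chi_{k+1}(q)$ with $name(q)$ nested inside a string of diamonds; this is \emph{not} of the shape $\der(\Box\lnot q\land\Box(\g q\land\h q))\to\theta$ that R2 requires, since R2 only eliminates a top-level name. So the claim that any such inconsistency ``collapses via R2'' does not hold, and this is precisely the step the whole construction hinges on. The mechanism that actually works --- visible in the paper's own proofs of Lemmata~\ref{lm:D9-property} and~\ref{lm:D11-property} --- is reflection into the given IRR-theories: one shows that $\langle i\rangle\bigl(name(p)\land\bigwedge(\Sigma_k\setminus\Sigma)\land\chi_{k+1}\bigr)\in\Gamma_i$ (otherwise its $[i]$-boxed negation would place the negated conjunction into $\Sigma\subseteq\Sigma_k$, contradicting the consistency of $\Sigma_k\cup\{\chi_{k+1}\}$); this formula is itself zig-zagging, so the IRR-property of $\Gamma_i$ supplies a witnessed version $\langle i\rangle(\ldots\land\chi_{k+1}(q))\in\Gamma_i$ for some $q$, and unpacking that membership yields the consistency of $\Sigma_{k+1}$. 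The fresh $q$ is thus delivered by $\Gamma_i$'s own IRR-property rather than chosen by you, and R2 plays no direct role at this stage. Without this reflection step (or a derived ``deep'' irreflexivity rule), your induction does not go through.
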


\begin{lemma}[\cite{Lor13}]\textsc{(property (T6))} Let $\Gamma,\Sigma,\Pi\in\IRR$ 
 such that $\R^{dt}_{\g}\Gamma\Sigma$ and $\R^{dt}_{\Box}\Sigma\Pi$. Then, there exists a 
 $\Delta\in\IRR$ such that $\R^{dt}_{[Ag]}\Gamma\Delta$ and $\R^{dt}_{\g}\Delta\Pi$.

\end{lemma}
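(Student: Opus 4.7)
The plan is to construct $\Delta$ as an IRR-theory extending a carefully chosen consistent seed that simultaneously forces both required canonical relations. Define the seed
\[\Delta_0 := \{\phi \in \lang : [Ag]\phi \in \Gamma\} \cup \{\f\psi : \psi \in \Pi\}.\]
The first component directly guarantees $\R^{dt}_{[Ag]}\Gamma\Delta$ for any MCS $\Delta \supseteq \Delta_0$. The second component dualizes to the required $\g$-condition: if $\{\f\psi : \psi \in \Pi\} \subseteq \Delta$, then for every $\phi$ with $\phi \notin \Pi$ (equivalently $\lnot\phi \in \Pi$) we have $\f\lnot\phi \in \Delta$ and hence $\g\phi \notin \Delta$, giving $\{\phi : \g\phi \in \Delta\} \subseteq \Pi$, i.e., $\R^{dt}_{\g}\Delta\Pi$.

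The crux of the proof is showing $\Delta_0$ is consistent. By compactness and closure of $\Pi$ under finite conjunction, it suffices to prove consistency of $\{\phi_1,\ldots,\phi_n,\f\chi\}$ whenever $[Ag]\phi_1,\ldots,[Ag]\phi_n \in \Gamma$ and $\chi \in \Pi$. Suppose otherwise: then $\vdash \bigwedge_i \phi_i \to \g\lnot\chi$. Applying the derived necessitation rule $\vdash \psi \Rightarrow \vdash [Ag]\psi$---obtained by combining R1 for $\Box$ with axiom A13 (yielding $\Box\psi \to [i]\psi$ for each $i$) and the diagonal instance $\phi_i := \psi$ of A11 (yielding $\bigwedge_i [i]\psi \to [Ag]\psi$)---together with A7 and the distribution of $[Ag]$ over conjunction, one derives $\vdash \bigwedge_i [Ag]\phi_i \to [Ag]\g\lnot\chi$, and hence $[Ag]\g\lnot\chi \in \Gamma$. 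Next, instantiating the contrapositive of A25 with $\phi \mapsto \lnot\chi$ yields the theorem $[Ag]\g\lnot\chi \to \g\Box\lnot\chi$, so $\g\Box\lnot\chi \in \Gamma$. Propagating along $\R^{dt}_{\g}\Gamma\Sigma$ gives $\Box\lnot\chi \in \Sigma$, and then along $\R^{dt}_{\Box}\Sigma\Pi$ gives $\lnot\chi \in \Pi$, contradicting $\chi \in \Pi$.

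To finish, extend $\Delta_0$ to an IRR-theory $\Delta$ via the Lindenbaum-with-labeling construction underlying Lemma~\ref{lm:consistent-formula-in-irr-theory}: adjoin $name(r)$ for some propositional variable $r$ not occurring in $\Delta_0$, enumerate the formulas of $\lang$, and whenever a zig-zagging formula is admitted into the growing set, attach a correspondingly labeled variant using another fresh variable. Consistency is preserved at each labeling step by invoking the irreflexivity rule R2, and fresh variables are always available since $\Delta_0$ is countable. The resulting $\Delta \in \IRR$ satisfies $\Delta \supseteq \Delta_0$, so both $\R^{dt}_{[Ag]}\Gamma\Delta$ and $\R^{dt}_{\g}\Delta\Pi$ hold by the set-inclusion observations above.

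The main obstacle is the consistency step: one must assemble the derived $[Ag]$-necessitation (itself a three-step chaining of R1, A13, and A11) together with the contrapositive of A25 to push a purported inconsistency of $\Delta_0$ backwards through $\Gamma$, $\Sigma$, and finally into $\Pi$, where it clashes with $\chi \in \Pi$. Extending the infinite consistent seed $\Delta_0$ to an IRR-theory is technically delicate but introduces no novelty beyond the template already established for Lemma~\ref{lm:consistent-formula-in-irr-theory}.
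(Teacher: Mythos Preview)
Your seed $\Delta_{0}$ and the consistency argument are correct: the reduction to a single $\f\chi$ via closure of $\Pi$ under conjunction is fine, and the chain through $[Ag]$-necessitation, the contrapositive of A25, and the relations $\R^{dt}_{\g}\Gamma\Sigma$, $\R^{dt}_{\Box}\Sigma\Pi$ is exactly what is needed. This part matches the approach in~\cite{Lor13}, to which the paper defers the proof.

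The gap is in the extension of $\Delta_{0}$ to an IRR-theory. Your justification ``fresh variables are always available since $\Delta_{0}$ is countable'' is false: since $\Pi$ is an MCS, for \emph{every} propositional variable $p$ either $p\in\Pi$ or $\lnot p\in\Pi$, so either $\f p$ or $\f\lnot p$ lies in $\Delta_{0}$; hence every variable occurs in $\Delta_{0}$ and no fresh $r$ exists for adjoining $name(r)$ or for the R2-based labelling steps. The template of Lemma~\ref{lm:consistent-formula-in-irr-theory} relies essentially on the starting data being a single finite formula, precisely so that cofinitely many variables remain fresh throughout the construction---this is not the case here.

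The intended fix (and what the construction in~\cite{Lor13} does, paralleling the paper's own Lemma~\ref{lm:D9-property}) is not to invoke R2 at all, but to pull zig-zagging witnesses from the ambient IRR-theory $\Gamma$: one first observes that $name(p)\in\Delta_{0}$ already (since $name(p)\in\Gamma$ implies $[Ag]\,name(p)\in\Gamma$), and then, whenever $\Delta_{n}\cup\{\psi_{n}\}$ is consistent with $\psi_{n}\in\mathsf{Zig}$, shows that an appropriate $\langle Ag\rangle$-prefixed zig-zagging formula lies in $\Gamma$, so that the second clause of Definition~\ref{def:irr-theory} applied to $\Gamma$ supplies the witness $q$. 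Verifying that $\Delta_{n}\cup\{\psi_{n}(q)\}$ is then consistent requires re-running the A25-based argument with the $\f\chi$-component handled carefully; this is genuinely more delicate than the D9 case because the seed here draws from two IRR-theories rather than one, and your sketch does not address it.
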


\begin{lemma}\textsc{(property (D9))}\label{lm:D9-property} Let $\Gamma\in\IRR$. 
 Then, there exists a 
 $\Delta \in \IRR$ such that $\R^{dt}_{\Box}\Gamma\Delta$ and for every 
 $\Sigma \in \IRR$, 
  if $\R^{dt}_{[i]}\Delta\Sigma$, then $\R^{dt}_{\Oi}\Gamma\Sigma$.

\end{lemma}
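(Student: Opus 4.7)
The plan is to construct $\Delta$ as a suitable extension of a candidate set, in the standard Henkin style adapted to IRR-theories. Specifically, I would define
\[
\Delta' := \{\psi \mid \Box\psi \in \Gamma\} \cup \{[i]\phi \mid \Oi\phi \in \Gamma\}.
\]
The rationale is that the first component ensures $\R^{dt}_{\Box}\Gamma\Delta$ for any MCS extending $\Delta'$, while the second component guarantees that every $\Sigma$ with $\R^{dt}_{[i]}\Delta\Sigma$ automatically contains every $\phi$ with $\Oi\phi \in \Gamma$, which is precisely what $\R^{dt}_{\Oi}\Gamma\Sigma$ requires.

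The main technical step is showing that $\Delta'$ is consistent. Suppose for contradiction that it is not. Then there exist $\Box\psi_1,\dots,\Box\psi_m \in \Gamma$ and $\Oi\phi_1,\dots,\Oi\phi_n \in \Gamma$ whose corresponding members of $\Delta'$ jointly entail $\bot$. Setting $\psi := \psi_1 \wedge \cdots \wedge \psi_m$ and $\phi := \phi_1 \wedge \cdots \wedge \phi_n$, and using that $[i]$ is a normal modal operator (so $[i]\phi_1 \wedge \cdots \wedge [i]\phi_n \leftrightarrow [i]\phi$), I get $\vdash_{\tds} \psi \to \lnot[i]\phi$. Necessitation (R1) and K for $\Box$ (A1) then yield $\Box\lnot[i]\phi \in \Gamma$, i.e.\ $\lnot\Diam[i]\phi \in \Gamma$. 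On the other hand, using that $\Oi$ is a normal modal operator (via A12 together with the derived $\Oi$-necessitation coming from A13 applied to $\Box$-theorems), I obtain $\Oi\phi \in \Gamma$, and axiom A14 then gives $\Diam[i]\phi \in \Gamma$, contradicting the maximal consistency of $\Gamma$.

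Having established consistency of $\Delta'$, I would extend it to an IRR-theory $\Delta$ using the IRR-theoretic analogue of Lindenbaum's lemma that underlies Lemma~\ref{lm:consistent-formula-in-irr-theory} (standard in the Gabbay-style construction of \cite{GabHodRey94} and adapted in \cite{Lor13}): enumerate $\lang$, add formulas preserving consistency, and at each zig-zagging stage inject a fresh propositional variable $q$ serving as a $name(q)$-label for the witness world. This yields $\Delta \in \IRR$ with $\Delta' \subseteq \Delta$.

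Finally, I verify the two required conditions. For $\R^{dt}_{\Box}\Gamma\Delta$: any $\Box\psi \in \Gamma$ puts $\psi \in \Delta' \subseteq \Delta$. For the implication concerning $\Sigma$: fix $\Sigma \in \IRR$ with $\R^{dt}_{[i]}\Delta\Sigma$ and any $\Oi\phi \in \Gamma$; then $[i]\phi \in \Delta' \subseteq \Delta$, and by the defining clause of $\R^{dt}_{[i]}$ we get $\phi \in \Sigma$, whence $\R^{dt}_{\Oi}\Gamma\Sigma$. The main obstacle is the consistency argument for $\Delta'$, which hinges on combining the K-behavior of $[i]$, $\Oi$, and $\Box$ with the ``ought implies can'' axiom A14 in the right order; once this is in place, the IRR-extension is routine given the machinery already available in the paper.
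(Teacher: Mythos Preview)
Your setup of $\Delta'$ and the consistency argument are essentially the paper's, and the final verification is fine. The genuine gap is in the step you call ``routine'': extending $\Delta'$ to an IRR-theory. Lemma~\ref{lm:consistent-formula-in-irr-theory} extends a single \emph{formula} to an IRR-theory, and its proof works because at each stage $\Gamma_n$ is finite, so a propositional variable $q$ not occurring in $\Gamma_n$ trivially exists. Your $\Delta'$ is infinite---it contains $\{\psi \mid \Box\psi \in \Gamma\}$---so in general every propositional variable already occurs in $\Delta'$, and you cannot simply ``inject a fresh $q$'' at each zig-zagging stage. The Lindenbaum-style enumeration you sketch does not, by itself, supply the witness variable.

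The paper's fix is to manufacture the witness inside $\Gamma$ rather than inside $\Delta_n$. At each odd stage with $\psi_n \in \mathsf{Zig}$ and $\Delta_n \cup \{\psi_n\}$ consistent, one shows
\[
\ODi\Big(name(p) \land \bigwedge_{\chi \in \Delta_n \setminus \Delta_0} \chi \land \psi_n\Big) \in \Gamma,
\]
by contradiction (if the $\Oi$-negation were in $\Gamma$, then by the definition of $\Delta_0$ and $[i]\theta \to \theta$ one derives $\neg\psi_n$ from $\Delta_n$). This displayed formula is itself a zig-zagging formula, so the IRR-property \emph{of $\Gamma$} yields a $q$ with the $(q)$-version in $\Gamma$; one then checks that $\Delta_n \cup \{\psi_n, \psi_n(q)\}$ is consistent by pushing an assumed inconsistency back through $\Oi$ and $\Box$ to contradict that membership. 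In short: the fresh variable is harvested from $\Gamma$'s IRR-structure via $\ODi$, not conjured locally in $\Delta_n$. This is precisely the non-routine content you need to supply.
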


\begin{lemma}\textsc{(property (D11))}\label{lm:D11-property} Let $\Gamma, \Delta\in \IRR$ 
such that $\R^{dt}_{\Oi}\Gamma\Delta$. Then, there exists a 
$\Sigma \in \IRR$ such that $\R^{dt}_{\Box}\Gamma\Sigma$, $\R^{dt}_{[i]}\Sigma\Delta$, and for all $\Pi \in \IRR$, if $\R^{dt}_{[i]}\Sigma\Pi$, then $\R^{dt}_{\Oi}\Gamma\Pi$.

\end{lemma}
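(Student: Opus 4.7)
The plan is to mirror the canonical construction used for property (D9) in Lemma~\ref{lm:D9-property}, but to add an extra family of formulas that pins the witness world $\Sigma$ inside the $[i]$-choice cell of $\Delta$. Concretely, I would define
\[
\Sigma_{0} := \{\phi : \Box\phi \in \Gamma\}\ \cup\ \{[i]\psi : \Oi\psi \in \Gamma\}\ \cup\ \{\lb i \rb \chi : \chi \in \Delta\}.
\]
The first block is designed to yield $\R^{dt}_{\Box}\Gamma\Sigma$; the second will force the closure property demanded by the lemma, since having $[i]\psi \in \Sigma$ whenever $\Oi\psi \in \Gamma$ transfers $\Oi$-consequents of $\Gamma$ into every $[i]$-successor of $\Sigma$; the third will deliver $\R^{dt}_{[i]}\Sigma\Delta$ by the usual MCS contrapositive (using that $\chi \in \Delta$ iff $\lnot\chi \notin \Delta$, so $\lb i \rb \chi \in \Sigma$ blocks $[i]\lnot\chi \in \Sigma$).

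The crux is establishing the consistency of $\Sigma_{0}$. Assuming otherwise, suitable finite subsets yield $\vdash \alpha \land [i]\beta \land \theta \to \bot$, where $\alpha = \bigwedge_k \alpha_{k}$ with $\Box\alpha_{k} \in \Gamma$, $\beta = \bigwedge_k \beta_{k}$ with $\Oi\beta_{k} \in \Gamma$ (using normality of $[i]$ to coalesce the $[i]\beta_{k}$), and $\theta = \bigwedge_k \lb i \rb \chi_{k}$ with $\chi_{k} \in \Delta$. The key move uses the S5 behavior of $[i]$ (A5 plus A6): each $\lb i \rb \chi_{k}$ satisfies $\lb i \rb \chi_{k} \leftrightarrow [i]\lb i \rb \chi_{k}$, so $\theta \leftrightarrow [i]\theta$, whence also $\lnot\theta \leftrightarrow [i]\lnot\theta$. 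The derivation therefore rewrites as $\vdash \alpha \to ([i]\beta \to [i]\lnot\theta)$. Necessitating by $\Box$ via R1 and distributing with A1, together with $\Box\alpha \in \Gamma$ (normality of $\Box$), produces $\Box([i]\beta \to [i]\lnot\theta) \in \Gamma$. Applying axiom A16, together with $\Oi\beta \in \Gamma$ (normality of $\Oi$, which is derivable from A12 and A13), yields $\Oi\lnot\theta \in \Gamma$; hence $\lnot\theta \in \Delta$ by $\R^{dt}_{\Oi}\Gamma\Delta$. But A5 gives $\chi_{k} \to \lb i \rb\chi_{k}$, so each $\lb i \rb \chi_{k} \in \Delta$, and thus $\theta \in \Delta$, a contradiction.

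Once $\Sigma_{0}$ is consistent, I would extend it to an IRR-theory $\Sigma$ using the same machinery that powers Lemma~\ref{lm:existence-lemma}: insert $name(p)$ for a fresh propositional variable $p$ (consistency preserved via rule R2), iteratively add witness labels for each zig-zag formula in the style of \cite{Lor13}, and close under Lindenbaum. Since $\Sigma \supseteq \Sigma_{0}$, condition $\R^{dt}_{\Box}\Gamma\Sigma$ and $\R^{dt}_{[i]}\Sigma\Delta$ follow from the first and third components of $\Sigma_{0}$ respectively, while the closure clause is immediate: any $\Pi \in \IRR$ with $\R^{dt}_{[i]}\Sigma\Pi$ and any $\Oi\psi \in \Gamma$ satisfies $[i]\psi \in \Sigma$ by construction, so $\psi \in \Pi$, giving $\R^{dt}_{\Oi}\Gamma\Pi$. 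The principal obstacle is the consistency argument, specifically recognizing that A16 is the axiom to deploy and massaging the derivation into a form where both sides under $\Box$ are genuinely $[i]$-formulas via the S5 identity $\lnot\theta \leftrightarrow [i]\lnot\theta$.
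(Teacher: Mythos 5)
Your consistency argument for $\Sigma_{0}$ is correct and well targeted: the coalescing of the $[i]\beta_{k}$, the observation that $\theta$ and $\lnot\theta$ are both equivalent to $[i]$-formulas by the S5 axioms A5/A6, and the deployment of A16 after $\Box$-necessitation is exactly the right syntactic route, and the final contradiction via $\phi \rightarrow \langle i \rangle \phi$ in $\Delta$ goes through. The gap lies in the step you dismiss as routine: extending $\Sigma_{0}$ to an IRR-theory ``using the same machinery that powers Lemma~\ref{lm:existence-lemma}.'' That machinery extends a set of the form $\{\phi\} \cup \{\psi : [\alpha]\psi \in \Gamma\}$ for a \emph{single} modality and a \emph{single} source theory; at each stage where a zig-zag formula $\psi_{n}$ is added, consistency of $\Sigma_{n} \cup \{\psi_{n}, \psi_{n}(q)\}$ is secured by reflecting the entire finite content of $\Sigma_{n}$ back into the source theory through $\langle \alpha \rangle$ and invoking that theory's own zig-zag closure. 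Your $\Sigma_{0}$ draws constraints from two different theories through three different modalities ($\Box$ and $\Oi$ from $\Gamma$, $\langle i \rangle$ from $\Delta$), so there is no single reflection channel: one must show that $name(p) \land \bigwedge \sigma \land \psi_{n}$ is realizable at one world that simultaneously respects the $\Box$-block, the $\Oi$-block, and all the $\langle i \rangle \chi$ with $\chi \in \Delta$, and then import the label $\psi_{n}(q)$ from whichever theory one reflects into. This is a substantive argument, not a citation; a workable version would reflect through $\Delta$ via $\langle i \rangle$ (using that the first two blocks are themselves forced into $\Delta$ in $[i]$-boxed form, via $\Box\phi \rightarrow \Box\Box\phi$, $\Box\phi\rightarrow[i]\phi$ and $\Oi\phi \rightarrow \Oi[i]\phi$), but as written your proof does not contain it. A second, smaller issue: $name(p)$ for a $p$ fresh for all of $\Sigma_{0}$ does not exist, since the $\Box$-block of $\Sigma_{0}$ already mentions every propositional variable; the fix is to reuse $\Gamma$'s own name, which lies in the $\Box$-block because $\Box\, name(p) \in \Gamma$.

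For comparison, the paper sidesteps the construction of $\Sigma$ from scratch entirely. Since $name(p) \in \Delta$ gives $\langle i \rangle name(p) \in \Delta$, the existence lemma (Lemma~\ref{lm:existence-lemma}) already supplies an IRR-theory $\Sigma$ with $\R^{dt}_{[i]}\Delta\Sigma$; the three required properties are then verified purely syntactically: $\R^{dt}_{[i]}\Sigma\Delta$ from $\der \langle i \rangle [i]\theta \rightarrow \theta$ (symmetry of $\R^{dt}_{[i]}$), $\R^{dt}_{\Box}\Gamma\Sigma$ from $\der \Box\phi \rightarrow \Oi\phi$ and $\der \Oi\phi \rightarrow \Oi[i]\phi$ pushed through $\R^{dt}_{\Oi}\Gamma\Delta$ and $\R^{dt}_{[i]}\Delta\Sigma$, and the closure clause from $\der \Oi[i]\phi \rightarrow \Oi[i][i]\phi$. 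This buys exactly what your approach is missing: all Lindenbaum and zig-zag bookkeeping is confined to the already-established existence lemma, so no new IRR-extension argument is needed. If you want to keep your construction, you must supply the labeling argument explicitly; otherwise, rerouting through the existence lemma is the shorter and safer path.
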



\begin{lemma}\label{lm:caonical-model-is-dtstit-model} The canonical model $M^{dt}|_{\mathsf{IRR}}$ belongs to the class of $\dtstit$ models.
\end{lemma}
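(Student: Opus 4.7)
The plan is to verify each frame condition of Definition~\ref{models_dtstit} for $M^{dt}|_{\IRR}$, invoking the four preceding lemmata for the properties they already establish and handling the remaining conditions via the standard canonical-model correspondence with the axioms of Definition~\ref{def:axioms}. First I would dispatch the S5 properties of $\R^{dt}_{\Box}, \R^{dt}_{[i]}, \R^{dt}_{[Ag]}$ by the familiar canonical correspondences: reflexivity from $A2,A5,A8$, transitivity from $A1,A4,A7$ combined with necessitation, and symmetry (equivalently, the Euclidean property plus reflexivity) from the Brouwerian-style axioms $A3,A6,A9$; these are routine (cf.~\cite[Thm.~4.29]{BlaRijVen01}). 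Condition (C1) is immediate from $A13$: if $\Box\phi\in\Gamma$ then $[i]\phi\in\Gamma$, so whenever $\R^{dt}_{[i]}\Gamma\Delta$, all $\Box$-boxed formulae in $\Gamma$ lie in $\Delta$, i.e., $\R^{dt}_{\Box}\Gamma\Delta$. Condition (C3) follows from $A11$ in the same contrapositive style, and the properties of $\R^{dt}_{\g}$ (transitivity and seriality) together with the fact that $\R^{dt}_{\h}$ is the converse of $\R^{dt}_{\g}$ follow from $A17$--$A19$ and the conversion axioms $A21,A22$, respectively. Linearity-of-histories conditions (T4) and (T5) correspond to the standard Priorean connectedness axioms $A23$ and $A24$; for (T4), given $\R^{dt}_{\g}\Gamma\Sigma$ and $\R^{dt}_{\g}\Gamma\Pi$, suppose for contradiction that none of $\R^{dt}_{\g}\Sigma\Pi$, $\Sigma=\Pi$, $\R^{dt}_{\g}\Pi\Sigma$ holds and derive a $\g,\p$-witnessed formula contradicting $A23$, and symmetrically for (T5) via $A24$.

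For the deontic conditions, (D8) follows from $A13$ exactly as (C1), and (D10) follows from $A15$ by the canonical argument: if $\Oi\phi\in\Delta$ for some $\Delta\in\R^{dt}_{\Box}(\Gamma)$, then $\Diam\Oi\phi\in\Gamma$, so $A15$ yields $\Box\Oi\phi\in\Gamma$, whence $\Oi\phi$ belongs to every $\R^{dt}_{\Box}$-successor, giving (D10). Lemmata~\ref{lm:D9-property} and~\ref{lm:D11-property}, together with the preceding two cited lemmata from~\cite{Lor13}, settle (C2), (T6), (D9), and (D11).

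The one remaining condition, and the main obstacle, is the temporal irreflexivity condition (T7): $\R^{dt}_{\Box}(w)\cap \R^{dt}_{\g}(w)=\emptyset$. This is precisely where the restriction to $\IRR$ pays off. Given $\Gamma\in\IRR$, by Definition~\ref{def:irr-theory} there exists a propositional variable $p$ with $name(p)=\Box\neg p\land \Box(\g p\land \h p)\in\Gamma$. Suppose toward a contradiction that some $\Delta$ satisfies both $\R^{dt}_{\Box}\Gamma\Delta$ and $\R^{dt}_{\g}\Gamma\Delta$. From $\Box\neg p\in\Gamma$ and $\R^{dt}_{\Box}\Gamma\Delta$ we obtain $\neg p\in\Delta$, while from $\Box\g p\in\Gamma$, reflexivity of $\R^{dt}_{\Box}$ gives $\g p\in\Gamma$ and hence $p\in\Delta$ via $\R^{dt}_{\g}\Gamma\Delta$, contradicting the consistency of $\Delta$. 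Thus (T7) holds on $\IRR$.

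Putting these verifications together establishes that $M^{dt}|_{\IRR}$ satisfies every clause of Definition~\ref{models_dtstit}, i.e., is a genuine $\dtstit$-model. The delicate point throughout is to make sure that the canonical relations restricted to $\IRR$ still witness the existential properties (C2), (T6), (D9), and (D11); this is exactly what Lemma~\ref{lm:existence-lemma} together with the four cited existence lemmata guarantee, so no additional care is needed beyond invoking them.
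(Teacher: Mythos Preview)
Your proposal is correct and follows essentially the same approach as the paper. The paper is terser: it defers all of the choice and temporal conditions $\mathbf{(C1)}$--$\mathbf{(C3)}$, $\mathbf{(T4)}$--$\mathbf{(T7)}$ wholesale to \cite[Lem.~19]{Lor13}, and then spells out only $\mathbf{(D8)}$ and $\mathbf{(D10)}$ (invoking Lemmata~\ref{lm:D9-property} and~\ref{lm:D11-property} for the other two deontic conditions), whereas you give a self-contained sketch of each condition, including the $name(p)$ argument for $\mathbf{(T7)}$. Your treatments of $\mathbf{(D8)}$ and $\mathbf{(D10)}$ via $A13$ and $A15$ match the paper's, the only cosmetic difference being that the paper routes both through the diamond characterization of the canonical relations (Lemma~\ref{lm:diamond-def-of-canonical-relations}), arguing with $\ODi\phi$ and $\Diam\ODi\phi$ rather than directly with $\Oi\phi$; the content is the same.
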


\begin{theorem}\textsc{(completeness)}\label{lm:completeness}
If $\phi \in \lang$ is a consistent formula, then $\phi$ is satisfiable on a $\dtstit$-model.
\end{theorem}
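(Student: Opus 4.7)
The plan is to assemble the lemmas established above into a standard Henkin-style completeness argument, but with the domain of the canonical model restricted to the set $\IRR$ of IRR-theories so that the temporal irreflexivity condition \textbf{(T7)} is met.

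First, suppose $\phi \in \lang$ is consistent. By Lemma~\ref{lm:consistent-formula-in-irr-theory}, there exists an IRR-theory $\Gamma_0$ such that $\phi \in \Gamma_0$, so $\Gamma_0 \in \IRR = W^{dt}|_{\IRR}$. Next, I would verify that $\IRR$ is a diamond saturated set. This is precisely the content of the existence lemma (Lemma~\ref{lm:existence-lemma}): for any $\Gamma \in \IRR$ and any $\langle\alpha\rangle\psi \in \Gamma$ with $[\alpha] \in \mathsf{Boxes}$, there is an IRR-theory $\Delta$ with $\R^{dt}_{[\alpha]}\Gamma\Delta$, and by the canonical definition of $\R^{dt}_{[\alpha]}$ together with the maximality of $\Delta$, the witness satisfies $\psi \in \Delta$. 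Hence $\IRR$ is diamond saturated.

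Third, I invoke Lemma~\ref{lm:caonical-model-is-dtstit-model} to ensure that $M^{dt}|_{\IRR}$ is indeed a $\dtstit$-model (this uses the S5 canonicity of $\R^{dt}_{\Box}$, $\R^{dt}_{[i]}$, $\R^{dt}_{[Ag]}$ via axioms A1--A11, the temporal structure from A17--A25 together with the irreflexivity rule R2 applied through the $name(p)$ labels in IRR-theories, and the deontic conditions via Lemmas~\ref{lm:D9-property} and~\ref{lm:D11-property}). Then, because $\IRR$ is diamond saturated and $\Gamma_0 \in \IRR$, the truth lemma (Lemma~\ref{lm:non-irr-truth-lemma}) yields $M^{dt}|_{\IRR}, \Gamma_0 \models \phi$, which is exactly the required satisfaction of $\phi$ on a $\dtstit$-model.

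The main obstacle conceptually is not in this final assembly step, which is routine, but rather in the fact that restricting the canonical model to $\IRR$ could in principle destroy the frame conditions that hold on the full canonical model; this is precisely why Lemmas~\ref{lm:D9-property} and~\ref{lm:D11-property} (and the cited analogues of \textbf{(C2)} and \textbf{(T6)}) are stated for IRR-theories rather than arbitrary MCSs, and why the truth lemma must be reproved for diamond saturated sets rather than for the full canonical model. Once those are in hand, the completeness theorem reduces to the three-line chain: consistent $\phi \Rightarrow$ $\phi$ lies in some IRR-theory $\Rightarrow$ the truth lemma applies in $M^{dt}|_{\IRR}$ $\Rightarrow$ $\phi$ is satisfiable.
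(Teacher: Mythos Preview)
Your proposal is correct and follows essentially the same route as the paper: extend $\phi$ to an IRR-theory via Lemma~\ref{lm:consistent-formula-in-irr-theory}, use the existence lemma to obtain diamond saturation of $\IRR$, apply the truth lemma (Lemma~\ref{lm:non-irr-truth-lemma}) on $M^{dt}|_{\IRR}$, and invoke Lemma~\ref{lm:caonical-model-is-dtstit-model} to confirm the restricted canonical model is a $\dtstit$-model. Your explicit remark that the witness $\Delta$ in Lemma~\ref{lm:existence-lemma} must also contain $\psi$ (which the lemma as stated does not literally assert) is a useful clarification the paper leaves implicit.
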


\section{Transformations to Utilitarian Models}\label{section3}
In this section, we 
investigate 
a truth preserving transformation from $\tds$ models to \textit{utilitarian} STIT models, embedded in a temporal language. In particular, we are concerned with the semantic characterization of the  \textit{dominant ought} \cite[Ch.4]{Hor01}. 
  We start with defining 
  the semantic machinery needed to treat these oughts. In particular, we will introduce a utility function $util$ that maps natural numbers (i.e. utilities) to worlds in our domain. In contrast to \cite{Hor01,Mur04}, we do not restrict the assignment of utilities to complete histories where all worlds on a maximal linear path have identical utility. The reason will be addressed at the end of the section, 
  where we discuss a problem related to 
   utility assignments over histories, 
   arising in 
   temporal extensions of 
    STIT. 

  
The pivotal notion involved in the dominant ought is that of a \textit{state}:
 Agent $i$ 
   cannot influence the choices of all other agents and, for this reason, one can regard the joint interaction of all agents excluding $i$, as a state (of nature) for $i$. 
   To be more precise, we define a \textit{state} $\R^s_{[i]}(v)$ \textit{for i at $v$} accordingly,
   \vspace{-5pt}
$$\R^s_{[i]} (v) = \bigcap\limits_{k\in Ag \setminus \{i\}} \!\!\R_k(v)$$
\vspace{-10pt}

\noindent Consequently, all possible combinations of choices available to the agents $Ag {\setminus} \{i\}$, are the different states available at that moment to agent $i$. 

 Subsequently, we define a 
  \textit{preference order} $\leq$ over choices (and subsets thereof). Let 
   $\R_{[i]}(v),\R_{[i]}(z)\subseteq\R_{\Box}(w)$, then weak preference 
    is defined accordingly, 
 $$\R_{[i]}(v)\leq\R_{[i]}(z) \iff \forall v^{\ast}\in\R_{[i]}(v),\forall z^{\ast}\in\R_{[i]}(z), util(v^{\ast})\leq util(z^{\ast})$$
That is, for an agent a choice is weakly preferred over another, when 
 all values of the possible outcomes of the former are at least as high as 
  those of the latter (where $util(v)$ 
   is the number 
   assigned to $v$, etc
  ). 
 Strict preference is defined as, 
$$\R_{[i]}(v)<\R_{[i]}(z) \iff \R_{[i]}(v)\leq\R_{[i]}(z) \land \R_{[i]}(z)\not\leq\R_{[i]}(v)$$

Next, a \textit{dominance order}
 $\preceq$ over choices $\R_{[i]}(v),\R_{[i]}(z){\subseteq}\R_{\Box}(w)$ is 
defined~as, 
\begin{center}
$\R_{[i]}(v)\preceq \R_{[i]}(z) \iff \forall \R^s_{[i]}(x)\subseteq \R_{\Box}(w), \R_{[i]}(v)\cap\R^s_{[i]}(x) \leq \R_{[i]}(z)\cap\R^s_{[i]}(x)$
\end{center}
We say an agent's choice 
 weakly dominates another, 
  if the values of the outcomes of the former are weakly preferred to those of the latter choice, \textit{given any possible state 
  available to that agent}. 
 For a 
  discussion of dominance orderings see \cite[Ch.~4]{Hor01}.  
 Again, in the usual way 
 we obtain \textit{strict dominance},
$$\R_{[i]}(v)\prec \R_{[i]}(z) \iff \R_{[i]}(v)\preceq \R_{[i]}(z) \land \R_{[i]}(z)\not\preceq \R_{[i]}(v)$$

On the basis of the above, we now formally introduce temporal \textit{utilitarian} STIT frames and models, defined over  \textit{relational} Kripke frames. 

\begin{definition}[Relational $\utstit$ Frames and Models]\label{models_utstit} Let $\R_{[\alpha]}(w) := \{v\in W | (w,v)\in R_{\alpha}\}$ for $[\alpha] \in \{\Box, [Ag], \g, \h \} \cup \{[i] | i \in Ag\}$. A relational \emph{Temporal Utilitarian STIT frame ($\utstit$-frame)} is defined as a tuple $F = ( W, \R_{\Box}, \{\R_{[i]} | i \in Ag\}, \R_{[Ag]}, \R_{\g}, \R_{\h},util )$ where $W$ is a non-empty set of worlds $w,v,u...$ and:

\begin{itemize}

\item For all $i\in Ag$, $\R_{\Box}$, $\R_{[i]}$, $\R_{[Ag]} \subseteq W\times W$ are equivalence relations for which conditions {\rm \textbf{(C1)-(C3)}} of Definition \ref{models_dtstit} hold. 
\item $\R_{\g}\subseteq W\times W$ is a transitive and serial binary relation, whereas $\R_{\h}$ is the converse of $\R_{\g}$, and the conditions {\rm \textbf{(T4)-(T7)}} of Definition \ref{models_dtstit} hold.
\item $util: W \mapsto \mathbb{N}$ is a utility function assigning each world in $W$ to a natural. 
\end{itemize}
A $\utstit$-model is 
a tuple $M = (F,V)$ where $F$ is a $\utstit$-frame and $V$ is a valuation function assigning propositional variables to subsets of $W$: i.e., $V{:}\ Var \mapsto \mathcal{P}(W)$.

\end{definition}

Notice that the above $\utstit$ frames only differ from $\tds$ frames through replacing the relations $\R_{\Oi}$ and corresponding conditions {\rm \textbf{(D8)-(D11)} (for each $i\in Ag$)} with the utility function $util$.  We observe that the assignment of utilities to worlds is agent-independent. Nevertheless, since the choices of an agent depend on which worlds are inside the choice-cells available to the agent, the resulting obligations are in fact agent-dependent.  
 Let us define the new semantics: 
 
 \begin{definition}[Semantics of $\utstit$ models]\label{def:sem_tus} Let $M$ be a $\utstit$-model, 
  $w\in W$ of $M$ and $|\!|\phi|\!|_M = \{ w \ |\ M,w \models  \phi \}$. We define \emph{satisfaction} of a formula $\phi\in\mathcal{L}_{\dtstit}$ as follows: 
 \begin{itemize}
 \item Clause (1)-(10) are the same as those from Definition \ref{Semantics_udstit}, with the exception of clause (7), which we replace by the following clause ($7^{\ast}$):




\end{itemize} 
\vspace{-3pt}
\[\begin{array}{lll} 
\hspace{-10pt}
\quad M,w\models \Oi\phi 	&\textit{iff} 	& \forall \R_{[i]}(v)\subseteq\R_{\Box}(w) \text{ if } \R_{[i]}(v) \not\subseteq |\!|\phi|\!| \text{ then } \exists \R_{[i]}(z)\subseteq\R_{\Box}(w) \text{ s.t. } \\
					&		&\text{(i) } \R_{[i]}(v)\prec \R_{[i]}(z), \text{(ii) } \R_{[i]}(z)\subseteq |\!|\phi|\!| \text{ and }\\
					& 		&\text{(iii) } \forall \R_{[i]}(x)\subseteq \R_{\Box}(w), \R_{[i]}(z)\preceq \R_{[i]}(x) \text{ implies } \R_{[i]}(x)\subseteq |\!|\phi|\!|\\ 
\end{array}
\]  

\end{definition}

Clause $(7^{\ast})$ is interpreted accordingly: Agent $i$ ought to see to it that $\phi$ iff for every choice $\R_{[i]}(v)$ available to $i$ that does not guarantee $\phi$ there (i) exists a strictly dominating choice $\R_{[i]}(z)$ that (ii) does guarantee $\phi$ and (iii) every weakly dominating choice $\R_{[i]}(x)$ over $\R_{[i]}(z)$ also guarantees $\phi$. In other words, all choices not guaranteeing $\phi$ are strictly dominated only by choices guaranteeing $\phi$. (We note that clause ($7^{\ast}$) is obtained through an adaption of the definition provided 
 in \cite{Hor01} 
  to relational frames.)  
 We show that the logic $\dtstit$ is also sound and complete with respect to the 
 class of 
  $\utstit$-frames.

\begin{theorem}\textsc{(soundness)}\label{thm:soundness-tus} $\forall \phi\in\lang$, if $\vdash_{\dtstit}\phi$, then $\mathcal{C}^{u}_f \models \phi$.
\end{theorem}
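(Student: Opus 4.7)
The plan is to check validity axiom-by-axiom over the class of $\utstit$-frames and verify that each rule preserves validity. Since the frame conditions on $\R_\Box$, $\R_{[i]}$, $\R_{[Ag]}$, $\R_\g$, and $\R_\h$ in Definition~\ref{models_utstit} coincide with those in Definition~\ref{models_dtstit}, and since the satisfaction clauses for all non-deontic connectives are unchanged, validity of A0--A11 and A17--A25 together with the preservation of validity for R0--R2 transfer verbatim from the proof of Theorem~\ref{thm:soundness}. The entire novelty of the argument therefore concentrates on the five deontic axioms A12--A16, which must be checked against the dominance-based clause $(7^{\ast})$.

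For the four comparatively easy deontic axioms I would proceed as follows. A13 is immediate: if $M,w\models\Box\phi$ then every choice-cell inside $\R_\Box(w)$ is contained in $|\!|\phi|\!|_M$, so the antecedent ``$\R_{[i]}(v)\not\subseteq|\!|\phi|\!|$'' in $(7^{\ast})$ is vacuously false and $\Oi\phi$ holds. A14 follows directly because, given $\Oi\phi$, any cell $\R_{[i]}(v)\subseteq\R_\Box(w)$ is either already contained in $|\!|\phi|\!|$ or, by $(7^{\ast})$, strictly dominated by a cell that is; either way a choice-cell realising $[i]\phi$ exists in the moment, giving $\Diam[i]\phi$. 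A15 holds because the dominance order $\preceq$ is defined relative to the ambient moment $\R_\Box(w)$, which is constant along $\R_\Box$-equivalence classes; hence the truth of $\Oi\phi$ is a moment-level property, and $\Diam\Oi\phi\to\Box\Oi\phi$ follows. A16 is handled by a contrapositive argument: from $\R_{[i]}(v)\not\subseteq|\!|\psi|\!|$ and $\Box([i]\phi\to[i]\psi)$ one deduces $\R_{[i]}(v)\not\subseteq|\!|\phi|\!|$, and then applies $(7^{\ast})$ for $\Oi\phi$ and uses the same hypothesis to transfer the three-part witness from $\phi$ to $\psi$.

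The main obstacle is axiom A12, the K-axiom for $\Oi$. Assuming $M,w\models\Oi(\phi\to\psi)\land\Oi\phi$ and picking a cell $\R_{[i]}(v)\not\subseteq|\!|\psi|\!|$, one must exhibit a cell $\R_{[i]}(z)$ witnessing conditions (i)--(iii) of $(7^{\ast})$ for $\psi$. The plan is to compose two applications of $(7^{\ast})$. First, using $\Oi\phi$, climb from $\R_{[i]}(v)$ to a strictly dominating cell $\R_{[i]}(z_1)$ whose entire $\preceq$-upper cone lies in $|\!|\phi|\!|$. Second, if $\R_{[i]}(z_1)\not\subseteq|\!|\phi\to\psi|\!|$, use $\Oi(\phi\to\psi)$ to climb further to a cell $\R_{[i]}(z_2)\succeq\R_{[i]}(z_1)$ whose upper cone lies in $|\!|\phi\to\psi|\!|$; by transitivity of $\preceq$ this upper cone still sits inside $|\!|\phi|\!|$, hence inside $|\!|\psi|\!|$, yielding condition (iii) for $\psi$. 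Strict dominance $\R_{[i]}(v)\prec\R_{[i]}(z_2)$ then follows from composing $\prec$ with $\preceq$. This composition is the delicate point, since in the general dominance semantics the operator $\Oi$ need not be a normal modality; it is condition (iii) of $(7^{\ast})$, encoding an upward-closed stability of the witness, that carries the proof through.
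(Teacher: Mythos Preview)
Your approach is the same as the paper's: induction on the derivation, with the non-deontic axioms and rules carried over verbatim from Theorem~\ref{thm:soundness} and the deontic axioms A12--A16 checked directly against clause~$(7^{\ast})$. The paper's own proof is a two-line sketch (with apparently garbled axiom labels), so your proposal is in fact considerably more explicit than what appears there.

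One small gap in your A12 argument is worth flagging. You begin by ``climbing via $\Oi\phi$'' from the offending cell $\R_{[i]}(v)$, but clause~$(7^{\ast})$ only licenses that climb when $\R_{[i]}(v)\not\subseteq\|\phi\|$; the case $\R_{[i]}(v)\subseteq\|\phi\|$ (which forces $\R_{[i]}(v)\not\subseteq\|\phi\to\psi\|$) needs the two climbs in the opposite order. Relatedly, your branching condition ``if $\R_{[i]}(z_1)\not\subseteq\|\phi\to\psi\|$'' should really read ``if $z_1$ is not $(\phi\to\psi)$-safe,'' since the obstruction may lie in the $\preceq$-upper cone of $z_1$ rather than in $z_1$ itself; in that situation you climb via $\Oi(\phi\to\psi)$ from the offending cell in the cone rather than from $z_1$. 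With these two adjustments the alternating-climb argument goes through, using transitivity of $\preceq$ (which holds because the choice--state intersections are nonempty by \textbf{(C2)}) and the fact that $\prec$ composed with $\preceq$ yields $\prec$.
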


\begin{proof}  We prove 
 by induction on the given derivation of $\phi$ in $\dtstit$. The argument for axioms A0-A6 and A12 is the same as in Theorem \ref{thm:soundness}. The validity of the axioms A7-A11 can be easily checked by applying 
 semantic clause $(7^{\ast})$ of Definition \ref{models_utstit}.
\end{proof}



We now prove that the class $\mathcal{C}^{u}_f$ of $\utstit$-frames characterizes the same set of formulae as the class $\mathcal{C}^d_f$ of $\dtstit$ frames. We prove both directions separately:




\begin{theorem}{\label{lemma_cu_cd}} $\forall \phi\in\mathcal{L}_{\dtstit}$ we have $\mathcal{C}^u_f\models \phi$ implies $\mathcal{C}^d_f\models\phi$.

\end{theorem}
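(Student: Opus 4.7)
The plan is to proceed by contraposition. Given a $\dtstit$-model $M = (F, V)$ and a world $w$ with $M, w \not\models \phi$, I construct a $\utstit$-model $M' = (F', V)$ on the same domain and valuation in which $\phi$ also fails at $w$. The frame $F'$ inherits every relation of $F$ except the $\R_{\Oi}$'s, so conditions \textbf{(C1)}-\textbf{(C3)} and \textbf{(T4)}-\textbf{(T7)} transfer immediately. By \textbf{(D10)}, for each moment $m = \R_\Box(v)$ and each agent $j$ the set $I_j^m := \R_{\Oj}(u)$ is independent of the representative $u \in m$, and by \textbf{(D11)} it is a union of complete $\R_{[j]}$-cells. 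I then define the utility function by
\[ util(v) := |\{ j \in Ag \mid v \in I_j^{\R_\Box(v)} \}|, \]
which simply counts the agents for whom $v$ is ideal.

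Next I prove by induction on $\phi$ that $M, x \models \phi$ iff $M', x \models \phi$ for every $x$. All cases apart from $\Oi \phi$ are routine because the relations involved are unchanged. For the obligation case, the relational clause yields $M, x \models \Oi \phi$ iff $I_i^{\R_\Box(x)} \subseteq \|\phi\|_M$, and the induction hypothesis gives $\|\phi\|_M = \|\phi\|_{M'}$; hence it suffices to show that $I_i^{\R_\Box(x)} \subseteq \|\phi\|_{M'}$ iff clause $(7^{\ast})$ holds at $x$ in $M'$. The central lemma is a classification of the dominance order at a moment $m$: under the count utility, every $\R_{[i]}$-cell contained in $I_i^m$ strictly dominates every $\R_{[i]}$-cell disjoint from $I_i^m$, while any two cells of the same type (both ideal or both non-ideal) are $\preceq$-equivalent.

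To prove the classification I exploit independence of agents \textbf{(C2)}: for any state $s = \R^s_{[i]}(x)$ and any two $i$-cells $C, C'$ at $m$, the intersections $C \cap s$ and $C' \cap s$ are non-empty, and $util$ is in fact constant on each, because for $j \neq i$ the membership of any point of $s$ in $I_j^m$ is determined entirely by its $\R_{[j]}$-class, which is that of $x$. Consequently, the utility comparison across the two sets collapses to the difference between the indicators of $C \subseteq I_i^m$ and $C' \subseteq I_i^m$: the difference is $1$ when $C'$ is ideal and $C$ is not, and $0$ when both are of the same type. Armed with this, if $I_i^m \subseteq \|\phi\|_{M'}$, every $C \not\subseteq \|\phi\|_{M'}$ must be non-ideal and is therefore strictly dominated by any ideal $C' \subseteq I_i^m \subseteq \|\phi\|_{M'}$, while any cell weakly dominating $C'$ is itself ideal (otherwise strict dominance would flip), verifying condition (iii) of $(7^{\ast})$. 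Conversely, if some ideal $C^{\ast} \not\subseteq \|\phi\|_{M'}$, no cell strictly dominates $C^{\ast}$, so $(7^{\ast})$ fails at $x$. The main obstacle is precisely the classification lemma: making the state-by-state utility comparison go through hinges on a careful combined use of \textbf{(C2)}, \textbf{(D10)}, and \textbf{(D11)}.
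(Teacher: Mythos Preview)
Your proposal is correct and follows the paper's overall contrapositive strategy---build a $\utstit$-model on the same domain by keeping every relation except the $\R_{\Oi}$'s and then show by induction that satisfaction is preserved---but your treatment of the crucial $\Oi$-case is genuinely different and cleaner. The paper leaves $\mathsf{util}$ abstract, postulating only three constraints (criterion~1: within a state, non-ideal worlds have utility $\leq$ ideal worlds; criterion~2: worlds outside $\R_{\otimes_{Ag}}(w)$ get strictly lower utility than those inside; criterion~3: ideal worlds in the same state have equal utility), and then pushes through the $\Oi$-equivalence by a lengthy case analysis that repeatedly constructs witnesses via $\ioa$ and appeals to the special role of $\R_{\otimes_{Ag}}(w)$. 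You instead pick the concrete count $util(v)=|\{j:v\in I_j^m\}|$ and reduce everything to a single structural classification of the dominance order: ideal $i$-cells strictly dominate non-ideal ones, and cells of the same type are $\preceq$-equivalent. This lemma makes both directions of the $\Oi$-biconditional essentially one-line. Note that your concrete $util$ in fact satisfies the paper's three criteria (indeed criterion~1 holds strictly, and criterion~2 follows because $util$ attains its maximum $|Ag|$ exactly on $\R_{\otimes_{Ag}}(w)$), so your construction is a canonical instance of theirs; what you gain is a transparent proof, while the paper's abstract formulation buys a nominal generality that is never exploited. One small point worth making explicit in your write-up: the existence of at least one ideal $i$-cell, needed to witness the strict dominator in clause~$(7^{\ast})$, comes from \textbf{(D9)}.
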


\begin{proof} We prove 
by contraposition assuming $\mathcal{C}^{d}_f \not\models \phi$. Hence, there is 
 a $\tds$-model, $\Md = (\Wd,\Rd_{\Box}, \{\Rd_i | i\in Ag\}, \Rd_{\h},\Rd_{\g},\Rd_{Ag},\{\Rd_{\Oi}|i\in Ag\},\Vd )$ 
such that $\Md,w \models \neg\phi$ for some $w\in \Wd$. We use $\Md$ to construct a model $\Ms$ in $\mathcal{C}^u_f$, such that:
$$\Ms = ( \Ws, \Rs_{\Box},\{\Rs_i | i\in Ag\},\Rs_{\g},\Rs_{\h},\Rs_{Ag},\mathsf{util},\Vs )$$
We show that $\Ms,w' \models \neg \phi$ for some $w' \in \Ws$. To define $\Ms$ let $\Ws := \Wd$, $\Rs_{\Box} := \Rd_{\Box}$, $\Rs_i := \Rd_i$, $\Rs_{\h} := \Rd_{\h}$, $\Rs_{\g} := \Rd_{\g}$, $\Rs_{Ag} := \Rd_{Ag}$, $\Vs(p) := \Vd(p)$ and let $\mathsf{util}$ be a function assigning each 
 $w\in \Ws$ to a natural number, 
satisfying the 
 following~criteria: 

\begin{itemize}
\item[1.] $\forall i\in Ag, \forall w,v,z\in \Wd$, if $v,z\in\Rd_{\Box}(w)$, $v\in \Rd^s_i(w)\setminus\Rd_{\Oi}(w)$, and $z\in\Rd^s_i(w)\cap \Rd_{\Oi}(w)$, then $\mathsf{util}(v) \leq \mathsf{util}(z)$;

\item[2.] $\forall w,v,z\in \Wd$, if $v\in \Rd_{\Box}(w){\setminus} \Rd_{\Oa}(w)$ and $z\in\Rd_{\Oa}(w)$, then $\mathsf{util}(v){<}\mathsf{util}(z)$;

\item[3.] $\forall w,u,z\in W$, if $v,z\in \Rd^s_i(w)\cap\Rd_{\Oi}(w)$, then $\mathsf{util}(v)=\mathsf{util}(z)$;


\end{itemize}
Let $\R_{\Oa} {:=} \bigcap_{i \in Ag} \R_{\Oi}$, we call 
  $\R_{[i]}(v)\subseteq\R_{\Oi}(w)$ an \textit{optimal choice} for agent~$i$. (It can be easily checked that the function $\mathsf{util}$ can be constructed.) 
  
  We state the following useful lemma (the proof of which is found in App.~\ref{appendix}). 

\begin{lemma}\label{lemma_abbrev} The following holds for any $\dtstit$ frame:\\
 \emph{(1)} $\forall v\in\Rd_{\Box}(w), \Rd_{\Box}(w)=\Rd_{\Box}(v)$; \emph{(2)} $ \forall v\in\Rd_i(w), \Rd_i(w)=\Rd_i(v)$;\\
  \emph{(3)} $\forall v\in\Rd^s_i(w), \Rd^s_i(w)=\Rd^s_i(v)$;
\emph{(4)} $ \forall v\in \R_{\Box}(w)$ we get $\R_{\Oi}(v)=\R_{\Oi}(w)$; \\
\emph{(5)} $\forall \R_{[i]}(z)\subseteq\R_{\Box}(w)$, either $\R_{[i]}(z)\subseteq\R_{\Oi}(w)$ or $\R_{[i]}(z)\cap\R_{\Oi}(w) = \emptyset$. 
\end{lemma}

We observe that conditions \textbf{(C1)}-\textbf{(C3)} and \textbf{(T4)}-\textbf{(T7)} will be satisfied in $\Ms$ since all of the relations of $\Md$, with the exception of $\R_{\Oi}$, are identical to those in $\Ms$. Moreover, 
 $\mathsf{util}$ complies with Definition \ref{models_utstit} and so 
 $\Ms$ is in fact a $\utstit$ model. The desired claim will follow if we additionally show that $\forall \psi\in \mathcal{L}_{\dtstit}$ and $\forall w\in \Wd$:
$$
\Md,w\models \psi \iff \Ms,w\models \psi
$$
We prove the claim by induction on the complexity of $\psi$.

\textit{Base Case.} Let $\psi$ be a propositional variable $p$. By the definition of $\Vs$ in $\Ms$ it follows directly that $\Md,w\models p$ iff $w \in \Vd$ iff $w \in \Vs$ iff $\Ms, w \models p$. 

\textit{Inductive Step.} The cases for the propositional connectives and the modalities $[\alpha] \in \{\Box,\h,\g,[Ag]\}\cup\{[i]|i\in Ag\}$ are straightforward. We consider the non-trivial case when $\psi$ is of the form $\Oi\phi$. Let us first prove the left to right direction.

 ($\Longrightarrow$) Assume $\Md,w\models \Oi\phi$. We show 
 that $\Ms,w\models\Oi\phi$. By the semantics 
 for $\Oi$ 
  (Definition \ref{models_utstit}) it suffices 
   to prove that: $\forall \Rs_i(v)\subseteq\Rs_{\Box}(w)$ if $\Rs_i(v) \not\subseteq |\!|\phi|\!|_{\Ms}$, then $\exists \Rs_i(u)\subseteq \Rs_{\Box}(w)$ such that the following three clauses 
   hold: 
(i) $\Rs_i(v) \prec \Rs_i(u)$; (ii) $\Rs_i(u)\ {\subseteq}\ |\!|\phi|\!|_{\Ms}$; and 
 (iii) 
$\forall \Rs_i(x)\ {\subseteq}\ \Rs_{\Box}(w)$, $\Rs_i(u) \preceq \Rs_i(x)$ implies $\Rs_i(x)\ {\subseteq}\ |\!|\phi|\!|_{\Ms}$.
 
Let $\Rs_i(v)\subseteq\Rs_{\Box}(w)$ be 
arbitrary 
and assume that $\Rs_i(v)\not\subseteq |\!|\phi|\!|_{\Ms}$. We 
 prove that there is a 
 $\Rs_i(u)\subseteq\Rs_{\Box}(w)$ for which conditions (i)-(iii) hold. First, we prove the existence of such a $\Rs_i(u)\subseteq\Rs_{\Box}(w)$: 
  By \textbf{(C1)} and \textbf{(D9)} of Definition \ref{models_dtstit}, we know, 
\begin{equation}\label{3}
\exists u\in \Wd \text{ such that } \Rd_i(u)\subseteq \Rd_{\Box}(w) \text{ and } \Rd_i(u)\subseteq\Rd_{\Oi}(w).
\end{equation}
We also 
 know by \textbf{(D9)} that $\forall j\in Ag{\setminus}\{i\}, \exists u_j\in\Rd_{\Box}(w) \text{ such that } \Rd_j(u_j)\subseteq\Rd_{\otimes_j}(w)$. 
By $\ioa$ we know that $\bigcap_{j\in Ag{\setminus}\{i\}}\Rd_j(u_{j}) \cap \Rd_i(u) \neq \emptyset$, i.e., there exists a $u^{\ast}\in \bigcap_{j\in Ag{\setminus}\{i\}}\Rd_j(u_{j}) \cap \Rd_i(u)$. Consequently, we obtain the following statement, 
\begin{equation}\label{4}
u^{\ast}\in \!\!\! \bigcap_{j\in Ag{\setminus}\{i\}}\!\!\!\Rd_{\otimes_j}(w) \cap \Rd_{\Oi}(w) = \Rd_{\Oa}(w).
\end{equation}
Last, by construction of $\Ms$ we know $\Rd_i(u) = \Rs_i(u)$. We show that (i)-(iii) hold:

\textbf{(i)} We show 
 $\Rs_i(v){\prec}\Rs_i(u)$, that is, 
 (a) $\Rs_i(v)\preceq\Rs_i(u)$ and (b) $\Rs_i(u)\not\prec\Rs_i(v)$:  

\textbf{(a)} Recall, $\Rs_i(v){\not\subseteq}\ |\!|\phi|\!|_{\Ms}$, we know $\exists v^{\ast}{\in}\ \Rs_i(v)$ s.t. $\Ms,v^{\ast}\not\models \phi$. By 
 definition of $\Ms$, $v^{\ast}{\in}\ \Rd_i(v)$ and by (IH) we get $\Md,v^{\ast}\not\models \phi$. Consequently, by the assumption that $\Md,w\models\Oi\phi$, and the fact that $\Md,v^{\ast}{\not\models}\  \phi$,
it follows that $v^{\ast}{\not\in}\ \Rd_{\Oi}(w)$. Hence, we know that $\Rd_i(v){\not\subseteq}\ \Rd_{\Oi}(w)$, which implies $\Rd_{\Oi}(w)\cap\Rd_i(v)=\emptyset$ by Lemma~\ref{lemma_abbrev}$-(5)$. 
 Therefore, by this fact along with statement (\ref{3}) above, we know~that, 

\begin{itemize}
\item[]
For all $x,u^{\btr},v^{\btr}\in \Wd$, if $v^{\btr}\in\Rd^s_i(x)\cap\Rd_i(v)$ and $u^{\btr}\in\Rd^s_i(x)\cap\Rd_i(u)$, then $v^{\btr}\in \Rd^s_i(x){\setminus}\Rd_{\Oi}(w)$ and $u^{\btr}\in\Rd^s_i(x)\cap\Rd_{\Oi}(w)$.
\end{itemize}
Let $x,u^{\btr},v^{\btr}\in Wd$ be arbitrary 
 and assume that $v^{\btr}\in\Rd^s_i(x)\cap\Rd_i(v)$ and $u^{\btr}\in\Rd^s_i(x)\cap\Rd_i(u)$. By the statement above, it follows that $v^{\btr}\in \Rd^s_i(x){\setminus}\Rd_{\Oi}(w)$ and $u^{\btr}\in\Rd^s_i(x)\cap\Rd_{\Oi}(w)$, which in conjunction with criterion 1 on the function $\mathsf{util}$ implies that $\mathsf{util}(v^{\btr})\leq \mathsf{util}(u^{\btr})$. Therefore, the following holds, 
\begin{itemize}
\item[] For all $x,u^{\btr},v^{\btr}\in \Wd$, if $v^{\btr}\in\Rd^s_i(x)\cap\Rd_i(v)$ and $u^{\btr}\in\Rd^s_i(x)\cap\Rd_i(u^{\btr})$, then $\mathsf{util}(v^{\btr})\leq \mathsf{util}(u)$.
\end{itemize}
It follows that $\forall \Rd^s_i(x)\subseteq\Rd_{\Box}(w)$, $\Rd^s_i(x)\cap\Rd_i(v) \leq \Rd^s_i(x)\cap\Rd_i(u)$. Hence, by the definition of $\preceq$ and the definition of $\Ms$, we obtain $\Rs_i(v)\preceq\Rs_i(u)$. 

\textbf{(b)} We need to show $\Rs_i(u)\not\preceq \Rs_i(v)$. By definition of $\preceq$, it suffices to show that $\exists x, \exists u^{\btr},\exists v^{\btr}{\in}\ \Ws$ s.t. $\Rs_i(x){\subseteq}\ \Rs_{\Box}(w)$, $u^{\btr}{\in}\ \Rs_i(u)\cap\Rs^s_i(x)$, $v^{\btr}{\in}\ \Rs_i(v)\cap\Rs^s_i(x)$ and $\mathsf{util}(v^{\btr}){<} \mathsf{util}(u^{\btr})$. Consider $\bigcap_{j\in Ag{\setminus}i}\Rd_j(u_{j}) \cap \Rd_i(u) \neq \emptyset$ from statement (\ref{4}). Let $\Rs^s_i(x):= \bigcap_{j\in Ag{\setminus}i}\Rs_j(u_j)$. Clearly, $\Rs^s_i(x)\subseteq\Rs_{\Box}(w)$. By $\ioa$ we know that $\Rd^s_i(x)\cap \Rd_i(v)\neq\emptyset$ (where $\Rd^s_i(x) = \bigcap_{j\in Ag{\setminus}i}\Rd_j(u_j)$), and so, $\Rs^s_i(x)\cap \Rs_i(v)\neq\emptyset$ by the definition of $\Ms$. Therefore, $\exists v^{\btr}\in\Rs^s_i(x)\cap\Rs_i(v)$. Since $u^{\ast}\in\bigcap_{j\in Ag\setminus i}\Rd_j(u_j)\cap\Rd_i(u)$ (see paragraph above statement (\ref{4})), we know that $u^{\ast}\in\bigcap_{j\in Ag\setminus i}\Rs_j(u_j)\cap\Rs_i(u)$, implying that $u^{\ast}\in\Rs^s_i(x)\cap\Rs_i(u)$. Since also $\Rd_i(v)\cap\Rd_{\Oa}(w)=\emptyset$, as derived in part (i), we obtain $v^{\btr}\in\Rd_{\Box}(w)\setminus\Rd_{\Oa}(w)$. By criterion 2 of 
 $\mathsf{util}$, 
 and the facts 
  $v^{\btr}\in\Rd_{\Box}(w)\setminus\Rd_{\Oa}(w)$ and  
   $u^{\ast}\in\Rd_{\Oa}(w)$, by statement (\ref{4}), we have that $\mathsf{util}(v^{\btr})<\mathsf{util}(u^{\ast})$. Therefore, $\Rs_i(u)\not\preceq\Rs_i(v)$.

\textbf{(ii)} By assumption 
  $\Rd_{\Oi}(w){\subseteq} |\!|\phi|\!|_{\Md}$ and 
   statement (\ref{3}) we get $\Rd_i(u)\subseteq \Rd_{\Oi}(w)$. By IH we have $|\!|\phi|\!|_{\Md}{=} |\!|\phi|\!|_{\Ms}$ and since $\Rd_i(u){=}\Rs_i(u)$ we know $\Rs_i(u)\subseteq |\!|\phi|\!|_{\Ms}$.

\textbf{(iii)} We prove the case by contraposition and show that $\forall \Rs_i(x)\subseteq\Rs_{\Box}(w)$, if $\Rs_i(x)\not\subseteq |\!|\phi|\!| $, then $\Rs_i(u) \not\preceq\Rs_i(x)$. Let $\Rs_i(x)$ by an arbitrary choice-cell in $\Rs_{\Box}(w)$ and assume that $\Rs_i(x)\not\subseteq |\!|\phi|\!|_{\Ms}$. We aim to prove that $\Rs_i(u) \not\preceq\Rs_i(x)$. By definition of $\preceq$ it suffices to show that $\exists \Rs^s_i(y)\subseteq \Rs_{\Box}(w)$ such that $\exists u^{\btr}\in \Rs_i(u)\cap\Rs^s_i(y)$, $\exists x^{\btr}\in \Rs_i(x)\cap\Rs^s_i(y)$, and $\mathsf{util}(x^{\btr})<\mathsf{util}(u^{\btr})$. 


By the assumption that $\Rs_i(x)\not\subseteq |\!|\phi|\!|_{\Ms}$, we know $\exists x^{\btr}\in\Rs_i(x)$ such that $\Ms,x^{\btr}\not\models \phi$. Clearly,  $x^{\btr}\in\Rd_i(x)$, and by (IH) we know that $\Md,x^{\btr}\not\models \phi$. 
Since $\Md, w \models \Oi \phi$, we obtain $(w,x^{\btr})\not\in\Rd_{\Oi}$, and by Lemma \ref{lemma_abbrev}$-(5)$ we obtain $\Rd_i(x)\not\subseteq \Rd_{\Oi}(w)$. 

By statement (\ref{4}) we had $u^{\ast}\in\Rd_{\Oa}(w)$ and $u^{\ast}\in\Rd_{\Oi}(w)$. Also, we know $u^{\ast}\in\Rd_i(u)$ by paragraph 
preceding statement (\ref{4}). Since, $u^{\ast}\in \bigcap_{j\in Ag{\setminus}\{i\}}\Rd_j(u_j) \cap \Rd_i(u)$, we also have $u^{\ast}\in \bigcap_{j\in Ag{\setminus}\{i\}}\Rd_j(u_j)$. Let $\Rd^s_i(u^{*}) := \bigcap_{j\in Ag{\setminus}\{i\}}\Rd_j(u_j)$. By $\ioa$ we obtain $\Rd_i(x)\cap \Rd^s_i(u^{*}) \neq \emptyset$, implying that there exists some $x^{\btr}\in \Rd_i(x)\cap\Rd^s_i(u^{*})$. It follows from \textbf{(D9)} and the fact 
  $\Rd_i(x)\not\subseteq \Rd_{\Oi}(w)$ that $x^{\btr}\not\in\Rd_{\Oa}(w)$, which 
   with the fact $u^{\ast}\in\Rd_{\Oa}(w)$, implies by definition of $\mathsf{util}$ (criterion 2) 
    that $\mathsf{util}(x^{\btr})<\mathsf{util}(u^{\ast})$. By the definition of $\Ms$, we have $x^{\btr}\in \Rs_i(x)\cap\Rs^s_i(u^{*}), u^{\ast}\in\Rs_i(u)\cap\Rs^s_i(u^{*})$ and $\mathsf{util}(x^{\btr})<\mathsf{util}(u^{\ast})$, which implies the desired claim. 
    
    ($\Longleftarrow$) We now prove the right to left direction: Assume $\Ms, w\models \Oi\phi$. We reason towards a contradiction by assuming $\Md,w\not\models \Oi\phi$. Hence, there exists a world $v\in \Rd_{\Oi}(w)$ such that $\Md, v\not\models \phi$. By \textbf{(D11)} we obtain $\Rd_{[i]}(v)\subseteq\Rd_{\Oi}(w)$ and hence $\Rd_{[i]}(v) \not\subseteq |\!|\phi|\!|_{\Md}$. By (IH) and the definition of $\Ms$, we obtain $\Rs_i(v)\not\subseteq|\!|\phi|\!|_{\Ms}$. This fact, in conjunction with the assumption 
     $\Ms, w\models \Oi\phi$, implies that there exists some $\Rs_i(z)\subseteq\Rs_{\Box}(w)$ such that the following holds:
(i) $\Rs_i(v)\prec \Rs_i(z)$; (ii) $\Rs_i(z)\subseteq |\!|\phi|\!|_{\Ms}$; and (iii) $\forall \Rs_i(x)\subseteq \Rs_{\Box}(w)$,   $\Rs_i(z)\preceq \Rs_i(x)$ implies $\Rs_i(x)\subseteq |\!|\phi|\!|_{\Ms}$.

By Lemma \ref{lemma_abbrev}$-(5)$ and the fact that $\Rs_i(z) =\Rd_i(z)$, we know that either \textbf{(a)} $\Rd_i(z)\subseteq\Rd_{\Oi}(w)$ holds or \textbf{(b)} $\Rd_i(z)\cap\Rd_{\Oi}(w) =\emptyset$ holds. 

Assume \textbf{(a)}. We know 
 $\Rs_i(v)\prec\Rs_i(z)$ and therefore, $\Rs_i(z)\not\preceq\Rs_i(v)$. Hence, $\exists \Rs^s_i(x)\subseteq\Rs_{\Box}(w),\exists z^{\ast}\in\Rs_i(z)\cap\Rs^s_i(x),\exists v^{\ast}\in\Rs_i(v)\cap\Rs^s_i(x)$ such that $\mathsf{util}(v^{\ast})<\mathsf{util}(z^{\ast})$. We also know 
 $\Rd_i(v) \subseteq\Rd_{\Oi}(w)$ and $\Rd_i(z)\subseteq\Rd_{\Oi}(w)$ and thus we obtain $z^{\ast},v^{\ast}\in\Rd_{\Oi}\cap \Rd^s_i(x)$. 
 Consequently, by the definition of $\mathsf{util}$ (criterion 3),
 we get $\mathsf{util}(v^{\ast})=\mathsf{util}(z^{\ast})$. Contradiction. 

Assume \textbf{(b)}. We know 
 $\Rs_i(v)\prec\Rs_i(z)$ and therefore, $\Rs_i(z)\not\preceq\Rs_i(v)$. Hence, $\exists \Rs^s_i(x)\subseteq\Rs_{\Box}(w),\exists z^{\ast}\in\Rs_i(z)\cap\Rs^s_i(x),\exists v^{\ast}\in \Rs_i(v)\cap\Rs^s_i(x)$ such that $\mathsf{util}(z^{\ast})\not\leq \mathsf{util}(v^{\ast})$. Then, by definition of $\mathsf{util}$ (criterion 1), 
 either (I) $z^{\ast}\not\in \Rd^s_i(x){\setminus}\Rd_{\Oi}(w)$ or (II) $v^{\ast}\not\in\Rd^s_i(x)\cap\Rd_{\Oi}(w)$. Suppose (I), since $z^{\ast}\in \Rs^s_i(x)$ we infer $z^{\ast}\in \Rd^s_i(x)$ and thus conclude $z^{\ast}\in\Rd_{\Oi}(w)$. However, by 
  earlier assumption $\Rd_i(z)\cap\Rd_{\Oi}(w) = \emptyset$ we obtain 
  $z^{\ast}\not\in \Rd_{\Oi}(w)$. Contradiction. Suppose (II), then since $v^{\ast}\in\Rd^s_i(x)$ we infer $v^{\ast}\not\in\Rd_{\Oi}(w)$. However, $\Rd_{[i]}(v)\subseteq \Rd_{\Oi}(w)$.  Contradiction. 
\end{proof}

\begin{corollary}\textsc{(completeness)}\label{thm:completeness-tus} $\forall \phi\in\lang$, if $\mathcal{C}^{u}_f \models \phi$, then $\vdash_{\dtstit}\phi$.
\end{corollary}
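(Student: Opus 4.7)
The plan is to derive the corollary directly from the two main results already established, namely Theorem \ref{lm:completeness} (completeness of $\dtstit$ with respect to neutral $\dtstit$-frames) and Theorem \ref{lemma_cu_cd} (every $\mathcal{L}_{\dtstit}$-formula valid on the class $\mathcal{C}^u_f$ is also valid on the class $\mathcal{C}^d_f$). No new model construction or transformation is required; the argument is a short contrapositive chain.

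I would argue by contraposition. Suppose $\not\vdash_{\dtstit} \phi$. Then $\neg\phi$ is $\dtstit$-consistent, since otherwise $\vdash_{\dtstit} \phi$ would be derivable using $A0$ and $R0$. By Theorem \ref{lm:completeness}, the consistent formula $\neg\phi$ is satisfiable on some $\dtstit$-model, i.e., there exists $M \in \mathcal{C}^d_f$ and a world $w$ with $M, w \models \neg\phi$. Hence $\mathcal{C}^d_f \not\models \phi$.

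Now apply Theorem \ref{lemma_cu_cd} in contrapositive form: from $\mathcal{C}^d_f \not\models \phi$ we obtain $\mathcal{C}^u_f \not\models \phi$. This contradicts the assumption $\mathcal{C}^u_f \models \phi$, establishing the required implication $\mathcal{C}^u_f \models \phi \Rightarrow \vdash_{\dtstit} \phi$.

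There is essentially no obstacle here, since all the heavy lifting (the canonical-model construction with $\mathsf{IRR}$-theories and the utility-assignment transformation verifying clause $(7^{\ast})$) has already been carried out in Theorems \ref{lm:completeness} and \ref{lemma_cu_cd}. The only point worth stressing in the write-up is that Theorem \ref{lemma_cu_cd} is stated as a \emph{frame-validity} transfer result, so one must rely on it via its contrapositive at the level of counter-models, which is exactly what its proof (starting from an arbitrary $\dtstit$-counter-model and producing a $\utstit$-counter-model with the same theory) actually delivers. Combined with Theorem \ref{thm:soundness-tus}, this yields adequacy of $\dtstit$ with respect to $\mathcal{C}^u_f$.
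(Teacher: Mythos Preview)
Your proposal is correct and matches the paper's own proof, which simply states that the corollary follows from Theorem~\ref{lemma_cu_cd} together with Theorem~\ref{lm:completeness}. You have merely unpacked the contrapositive chain that the paper leaves implicit.
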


\begin{proof} Follows from Theorem \ref{lemma_cu_cd} above, together with Theorem\ref{lm:completeness}.

\end{proof}

\begin{theorem}\label{lemma_cd_cu}
$\forall \phi \in \mathcal{L}_{\dtstit}$, we get $\mathcal{C}^d_f\models \phi$ implies $\mathcal{C}^u_f\models \phi$.
\end{theorem}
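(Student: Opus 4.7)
The plan is to obtain this theorem as an immediate corollary of results already established in the paper, rather than by a fresh semantic construction. Suppose $\mathcal{C}^d_f \models \phi$. First I would invoke the completeness of $\dtstit$ with respect to the neutral frame class (Theorem \ref{lm:completeness}, read in its contrapositive form): if $\phi$ is valid on every $\dtstit$-frame, then $\phi$ is derivable, i.e.\ $\vdash_{\dtstit} \phi$. Then I would apply soundness of $\dtstit$ with respect to the utilitarian frame class (Theorem \ref{thm:soundness-tus}) to conclude $\mathcal{C}^u_f \models \phi$. The chain $\mathcal{C}^d_f \models \phi \Rightarrow\, \vdash_{\dtstit} \phi \Rightarrow \mathcal{C}^u_f \models \phi$ then yields the claim.

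Structurally this mirrors the proof of Corollary \ref{thm:completeness-tus}, which combined Theorem \ref{lemma_cu_cd} with Theorem \ref{lm:completeness}; here we combine Theorem \ref{lm:completeness} with Theorem \ref{thm:soundness-tus} instead. Together with Theorem \ref{lemma_cu_cd}, the present result establishes that the neutral and utilitarian semantics characterise exactly the same set of $\lang$-validities, and hence that both classes of frames are adequate for $\dtstit$.

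I would resist pursuing a direct semantic route, even though it may look tempting in light of the construction used for Theorem \ref{lemma_cu_cd}. Concretely, going from a $\utstit$-countermodel to a $\dtstit$-countermodel would require defining the relations $\Rd_{\Oi}$ from the utility function $\mathsf{util}$. The natural candidate is to let $\Rd_{\Oi}(w)$ be the union of the choice-cells $\Rs_{[i]}(v) \subseteq \Rs_{\Box}(w)$ that are $\preceq$-undominated, so that \textbf{(D11)} is automatic and \textbf{(D8)}--\textbf{(D10)} follow from the fact that dominance depends only on the ambient moment. The obstacle is in the inductive clause for $\Oi\phi$: the dominance clause $(7^{\ast})$ is equivalent to ``every optimal choice guarantees $\phi$'' only when the dominance preorder is well-founded at each moment, and Definition \ref{models_utstit} imposes no such condition. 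Routing through the derivability relation entirely bypasses this complication, which is the main reason I expect the corollary-style proof to be the cleanest, essentially one-line argument.
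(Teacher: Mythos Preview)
Your proposal is correct and takes essentially the same approach as the paper: the paper's proof of Theorem~\ref{lemma_cd_cu} is the one-line observation that it follows from Theorem~\ref{lm:completeness} together with Theorem~\ref{thm:soundness-tus}, which is exactly the chain $\mathcal{C}^d_f \models \phi \Rightarrow\, \vdash_{\dtstit} \phi \Rightarrow \mathcal{C}^u_f \models \phi$ you describe. Your additional discussion of why a direct semantic construction would be more delicate is accurate but goes beyond what the paper records.
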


\begin{proof} Follows from 
 Theorem \ref{lm:completeness} together with 
  Theorem \ref{thm:soundness-tus}. 
\end{proof}




\subsubsection{The Problem with Two-Valued Utility Functions.}

A well studied candidate function for assigning utilities to \textit{histories}, is the \textit{two-valued} approach where the range of utilities is $\{0,1\}$ (e.g. \cite{Hor01,Mur04}). As a concluding remark of the present section, we briefly discuss the philosophical ramifications of using binary utility functions in a temporal setting.

Observe that, at a moment where all worlds have a utility of $1$ (or all $0$), every obligation becomes vacuously satisfied by definition---in such a scenario we would have 
 $\Oi\phi$ iff $\Box\phi$---and every choice for each agent will ensure all optimal outcomes (see clause $(7^{\ast})$ of Definition \ref{def:sem_tus}).\footnote{This also holds when all intersections of choices of agents contain both a $1$ and a $0$.} If in such a scenario, following 
  \cite{Hor01,Mur04}, utilities are assigned to complete histories and thus remain constant through time, all obligations will also be vacuously satisfied at every future moment from thereon (namely, as one moves into the future, the set of histories passing through a moment can only decrease or stay the same). That at such moments all obligations are vacuously satisfied means that no obligation can be violated. Unfortunately, this also implies that at such moments \textit{contrary-to-duty} (CTD) reasoning---i.e., reasoning about obligations that come into being when a previous obligation has been violated---becomes impossible because CTD obligations require the possibility to violate one's obligations in the first place (e.g. see \cite{PraSer94}).


In order to reason with CTD obligations in \textit{temporal} utilitarian STIT logics, we need to ensure that obligations can be violated, that is, we must consider deliberative obligations: $\Oi^d\phi := \Oi\phi\land\lnot\Box\phi$. This means that, for an obligation $\Oi^d\phi$ to hold, there exists a choice that does not guarantee $\phi$ and, by definition, the latter choice must be strictly dominated by (only) $\phi$ choices. In the binary setting this means that for all optimal choices, there is at least one outcome with a strictly higher utility (which must be $1$). 
 Unfortunately, this 
 has a drawback since at such moments 
 \textit{at least} one of the following 
 holds: (1) Worlds in the intersection of all agents acting in accordance with their duty 
 \emph{all have value $1$}. (2) Worlds in the intersection of all agents violating their duty 
 \textit{all have value} $0$. 
 
Relative to the aforementioned, Fig.~\ref{fig:CTD-situations} illustrates the (only) three scenarios possible in a two-agents, two-choices setting: 
 Sub-figure (i) implies the impossibility 
 of future CTD reasoning 
 in all cases in which at least one 
 agent satisfies its 
  obligation. Sub-figure (ii) implies that there is no future CTD possible in every case witnessing at least one 
   agent violating its 
    obligation. 
    Last, sub-figure (iii) indicates that future CTD obligations can only occur if one of the agents satisfies her obligation if and only if the other violates his. (With the impossibility of future CTD reasoning 
    we mean that from that moment onward, all obligations will be 
    vacuously satisfied.) 
All three
 cases are undesirable since 
 they do not allow for future recuperation in those situations in which they clearly should. 


The above exhibits that, although $\Oi$ does not depend on any temporal aspect (e.g. \cite{Mur04}), we can identify 
 utility functions that are less suitable 
  for temporal extensions of 
  STIT. Binary functions relative to moments only, 
   do not cause these problems, although they have their own issues~\cite{Hor01}. In the case where the function ranges 
 over the set of reals, 
  it is possible to assign utilities in such a way that there is always CTD reasoning possible. 
 In future work, we aim to specify such 
  utility functions, making particular use of temporal aspects of $\tds$-frames. 

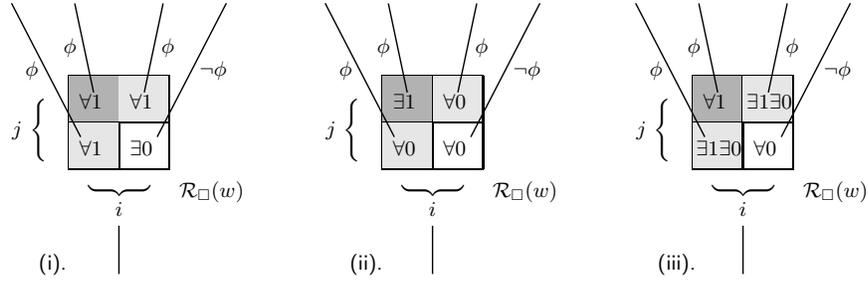
\begin{figure}[t]
\renewcommand{\arraystretch}{1.8}
\newcolumntype{g}{>{\columncolor{Gray}}c}
\hspace{7pt}
\scalebox{0.88}{
\begin{tikzpicture}[modal,font=\sffamily, 
]
\node[bag] (m1) [
] {\begin{tabular}{| p{18.5pt} | p{18.5pt} | }
			\hline 
			 \cellcolor{Gray2}$\ \forall 1  $ & \cellcolor{Gray1}$\ \forall 1 \ $ \\
			\hline
			 \cellcolor{Gray1}$\ \forall 1  $ & $\ \exists 0 \  $ \\
			\hline
		\end{tabular}};
\node[] (root) [below=of m1,yshift=10.5mm
] {};
\node[] (root2) [below=of root,yshift=7mm
] {};
\node[] (m1rb) [above=of root,yshift=-7mm,xshift=5mm] {};
\node[] (m1lb) [above=of root,yshift=-7mm,xshift=-5mm] {};
\node[] (m1lu) [above=of root,yshift=-0.3mm,xshift=-3.5mm] {};
\node[] (m1ru) [above=of root,yshift=-0.3mm,xshift=3.5mm] {};

\node[] (m2) [above=of m1,xshift=-7mm,yshift=-5mm] {};
\node[] (m3) [above=of m1,xshift=7mm,yshift=-5mm] {};
\node[] (m4) [above=of m1,xshift=-17mm,yshift=-5mm] {};
\node[] (m5) [above=of m1,xshift=17mm,yshift=-5mm] {};

\path[-] (root2) edge[color=black] node[above, xshift=14mm,yshift=6mm] {$\R_{\Box}(w)$} 
node[above,xshift=-10mm,yshift=-5mm] {(i).}
node[above,xshift=0mm,yshift=4mm] {$\overset{\biggv}{i}$} 
node[above,xshift=-13.5mm,yshift=12.1mm] {$ j\ \biggh$} (root);

\path[-] (m1lu) edge[color=black] node[left] {$\quad \phi$
} (m2);
\path[-] (m1ru) edge[color=black] node[right] {$ \phi$
} (m3);
\path[-] (m1lb) edge[color=black] node[left] {$\quad \phi$
} (m4);
\path[-] (m1rb) edge[color=black] node[right] {$\lnot\phi$
} (m5);
\end{tikzpicture}
}
\hspace{12pt}
\scalebox{0.88}{
\begin{tikzpicture}[modal,font=\sffamily, 
]
\node[bag] (m1) [
] {\begin{tabular}{| p{18.5pt} | p{18.5pt} | }
			\hline 
			 \cellcolor{Gray2}$\ \exists 1 \ $ & \cellcolor{Gray1}$\ \forall 0 \ $ \\
			\hline
			 \cellcolor{Gray1}$\ \forall 0 \ $ & $\ \forall 0 \  $ \\
			\hline
		\end{tabular}};
\node[] (root) [below=of m1,yshift=10.5mm
] {};
\node[] (root2) [below=of root,yshift=7mm
] {};
\node[] (m1rb) [above=of root,yshift=-7mm,xshift=5mm] {};
\node[] (m1lb) [above=of root,yshift=-7mm,xshift=-5mm] {};
\node[] (m1lu) [above=of root,yshift=-0.3mm,xshift=-3.5mm] {};
\node[] (m1ru) [above=of root,yshift=-0.3mm,xshift=3.5mm] {};

\node[] (m2) [above=of m1,xshift=-7mm,yshift=-5mm] {};
\node[] (m3) [above=of m1,xshift=7mm,yshift=-5mm] {};
\node[] (m4) [above=of m1,xshift=-17mm,yshift=-5mm] {};
\node[] (m5) [above=of m1,xshift=17mm,yshift=-5mm] {};

\path[-] (root2) edge[color=black] node[above, xshift=14mm,yshift=6mm] {$\R_{\Box}(w)$} 
node[above,xshift=-10mm,yshift=-5mm] {(ii).}
 node[above,xshift=0mm,yshift=4mm] {$\overset{\biggv}{i}$} 
node[above,xshift=-13.5mm,yshift=12.1mm] {$ j\ \biggh$} (root);

\path[-] (m1lu) edge[color=black] node[left] {$\quad \phi$
} (m2);
\path[-] (m1ru) edge[color=black] node[right] {$ \phi$
} (m3);
\path[-] (m1lb) edge[color=black] node[left] {$\quad \phi$
} (m4);
\path[-] (m1rb) edge[color=black] node[right] {$\lnot\phi$
} (m5);
\end{tikzpicture}
}
\scalebox{0.88}{
\hspace{12pt}
\begin{tikzpicture}[modal,font=\sffamily, 
] 
\node[bag] (m1) [
] {\begin{tabular}{| p{18.5pt} | p{18.5pt} | }
			\hline 
			 \cellcolor{Gray2}$\ \forall 1 \ $ & \cellcolor{Gray1}$\exists 1\exists 0 \ $ \\
			\hline
			 \cellcolor{Gray1}$ \exists 1\exists 0 $ & $\ \forall 0 \  $ \\
			\hline
		\end{tabular}};
\node[] (root) [below=of m1,yshift=10.5mm
] {};
\node[] (root2) [below=of root,yshift=7mm
] {};
\node[] (m1rb) [above=of root,yshift=-7mm,xshift=5mm] {};
\node[] (m1lb) [above=of root,yshift=-7mm,xshift=-5mm] {};
\node[] (m1lu) [above=of root,yshift=-0.3mm,xshift=-3.5mm] {};
\node[] (m1ru) [above=of root,yshift=-0.3mm,xshift=3.5mm] {};

\node[] (m2) [above=of m1,xshift=-7mm,yshift=-5mm] {};
\node[] (m3) [above=of m1,xshift=7mm,yshift=-5mm] {};
\node[] (m4) [above=of m1,xshift=-17mm,yshift=-5mm] {};
\node[] (m5) [above=of m1,xshift=17mm,yshift=-5mm] {};

\path[-] (root2) edge[color=black] node[above, xshift=14mm,yshift=6mm] {$\R_{\Box}(w)$} 
node[above,xshift=-10mm,yshift=-5mm] {(iii).}
node[above,xshift=0mm,yshift=4mm] {$\overset{\biggv}{i}$} 
node[above,xshift=-13.5mm,yshift=12.1mm] {$ j\ \biggh$} (root);

\path[-] (m1lu) edge[color=black] node[left] {$\quad \phi$
} (m2);
\path[-] (m1ru) edge[color=black] node[right] {$ \phi$
} (m3);
\path[-] (m1lb) edge[color=black] node[left] {$\quad \phi$
} (m4);
\path[-] (m1rb) edge[color=black] node[right] {$\lnot\phi$
} (m5);
\end{tikzpicture}
}
\caption{The only three scenarios where $\otimes_i\phi{\land}\otimes_j\phi{\land}\lnot\Box\phi$ holds true at $\R_{\Box}(w)$ (for $Ag=\{i,j\}$ with 2 choices). Choices of $i$ are vertically presented, those of $j$ horizontally. The symbol $\forall n$ means every history is assigned value $n$, and $\exists n$ means that some history is assigned $n$, for $n \in \{0,1\}$. Optimal choices are shaded and darker shaded when overlapping. At all $\forall k$ outcomes (with $k\in\{0,1\}$), CTD reasoning becomes impossible. }
\label{fig:CTD-situations}
\vspace{-5pt}
\end{figure}



\section{Conclusion and Future Work}\label{conclusion}

In this paper, we extended 
 deontic STIT logic 
   \cite{Mur04} to the temporal setting, incorporating the logic from 
    \cite{Lor13}. In doing so, 
    we answered a long standing open question for temporal embeddings of deontic STIT (e.g. see \cite{BelPerXu01,Hor01,Mur04}). 
    We showed that the resulting logic $\tds$ is sound and complete with respect to its class of 
    frames. 
 We dubbed these frames \emph{neutral} 
  since they allowed us to obtain adequacy of the calculus, while allowing us to refrain from committing 
   to specific 
    utility functions. Subsequently, we showed how these neutral frames can be transformed into particular utilitarian models, while preserving truth. 
  We also briefly argued that 
 in a temporal setting, binary value assignments to histories 
 can generate undesirable behavior 
  with respect to 
   contrary-to-duty obligations.

For future work, we leave open the 
 problem of whether temporal STIT (from \cite{Lor13}) and 
 its deontic extension $\dtstit$ are decidable. 
 Furthermore, we aim to investigate alternative utility assignments that explicitly exploit the temporal aspects of $\tds$; e.g., it might be interesting to consider a dynamic approach taking into account 
  that natural agents have limited foresight relative to (future) utilities. 


%
%
 \bibliographystyle{splncs04}
%


\appendix


\section{Proofs}\label{appendix}

\begin{customthm}{\ref{thm:soundness}}\textsc{(soundness)} $\forall \phi\in\mathcal{L}_{\dtstit}$, $\vdash_{\dtstit}\phi$ implies $\models \phi$.
\end{customthm}

\begin{proof} It suffices to show that all axioms are valid and all inference rules preserve validity over the class of $\dtstit$ frames. The rules $R0$ and $R1$, as well as axioms $A0 - A11$, and $A17 - A25$ can be easily checked (See~\cite{Lor13}). We show that $A13 - A16$ are valid and that the $R2$ preserves validity. Let $M$ be an arbitrary $\dtstit$-model with $w$ a world in $M$.

A13. Assume $M,w\models \Box\phi$ and also that $\R_{[i]}wu$ and $\R_{\Oi}wv$. By conditions (C1) and (D8), we know that $\R_{[i]} \subseteq \R_{\Box}$ and $\R_{\Oi} \subseteq \R_{\Box}$, respectively. Therefore, it follows that $\R_{\Box}wu$ and $\R_{\Box}wv$, which implies $M,u \models \phi$ and $M,v \models \phi$ by the assumption. This implies that $M,w \models [i]\phi$ and $M,w \models \Oi \phi$.

A14. Assume $M,w\models \Oi\phi$. By condition (D9), there exists a $v$ such that $\R_{\Box}wv$, and for all $u$ in the model $M$, if $\R_{[i]}vu$, then $\R_{\Oi}wu$. Suppose further that $\R_{[i]}vz$ for an arbitrary $z$; from this, and the previous statement, we may conclude that $\R_{\Oi}wz$ holds, which by the initial assumption implies that $M,z \models \phi$. Therefore, $M,v \models [i] \phi$, and since $\R_{\Box}wv$ holds for some $v$, we have that $M,w \models \Diamond [i] \phi$.

A15. Assume $M,w\models \Diam\Oi\phi$. Thus, there exists a $u$ such that $\R_{\Box}wu$ and $M,u \models \Oi \phi$. Consider an arbitrary $v$ and $z$ such that $\R_{\Box}wv$ and $\R_{\Oi}vz$. By condition (D10), and the fact that $\R_{\Box}wu$, $\R_{\Box}wv$, and $\R_{\Oi}vz$ hold, we may conclude that $\R_{\Oi}uz$ holds. Consequently, $M,z \models \phi$ holds; this fact, in conjunction with the assumption that $\R_{\Box}wv$ and $\R_{\Oi}vz$ hold for arbitrary $v$ and $z$, implies that $M,w \models \Box \Oi \phi$.

A16. Assume $M,w\models \Box([i]\phi\rightarrow [i]\psi)$, $M,w\models \Oi\phi$, and $\R_{\Oi}wu$ for an arbitrary $u$. By condition (D11), the assumption $\R_{\Oi}wu$, implies that there exists a world $v$ such that (i) $\R_{\Box}wv$, (ii) $\R_{[i]}vu$, and (iii) for all $z$, if $\R_{[i]}vz$, then $\R_{\Oi}wz$. The initial assumption, along with fact (i) that $\R_{\Box}wv$, entails that $M,v \models [i] \phi \rightarrow [i] \psi$. Suppose that $\R_{[i]}vx$ for an arbitrary $x$; from fact (iii) we may conclude that $\R_{\Oi}wz$, which with the assumption that $M,w\models \Oi\phi$, implies that $M,z \models \phi$. Hence, $M,v \models [i] \phi$, implying that $M,v \models [i] \psi$. Last, since we know that $\R_{[i]}vu$ by fact (ii), we can conclude that $M,u \models \psi$. Therefore, $M,w\models \Oi\phi \rightarrow \Oi \psi$.

Last, we show \textit{soundness} of the \textbf{IRR}-rule from $\tstit$. Recall the rule:
\begin{center}
\AXC{$\Box \lnot p \land \Box (\g p \land \h p)) \rightarrow \phi$}\RL{if $p$ is atomic and does not occur in $\phi$}
\UIC{$\phi$}
\DP 
\end{center}
We assume that 
 $p$ does not occur in $\phi$. We prove the result by contraposition 
 and assume that $\phi$ is invalid. Therefore, we know there exists a model $M = (F,V)$ s.t. $F$ is a $\dtstit$-frame and $M,w \not\models \phi$ for some $w\in W$ of $M$. We define another $\dtstit$-model $M' = (F,V')$ over the frame $F$ and define the valuation $V'$ as follows:
\[ V'(q) := \begin{cases} 
      V(q) & \text{if } q \neq p, \\
      W \setminus \R_{\Box}(w) & \text{otherwise.}
   \end{cases}
\]
where $\R_{\Box}(w) = \{ v | (w,v)\in \R_{\Box}\}$ (\textit{i.e.} the valuation $V'$ of $p$ contains all worlds except for those sharing the same moment with $w$). Clearly, since $\phi$ does not contain $p$ and the other atomic propositions are valued in the same way in $M$ as in $M'$ we get that $M',w\models \lnot \phi$. However, by the construction of $V'$ and because $F$ is irreflexive by condition (T7), we have that $M',w\models \Box \lnot p \land \Box (\g p\land \h p))$ (the irreflexivity excludes the possibility that for some $u\in \R_{\Box}(w)$, $M',u\models p\land \lnot p$). Since, $M',w \not\models \phi$, by Definition \ref{Semantics_udstit}, we have that $M',w\not\models (\Box \lnot p \land \Box (\g p\land \h p)) \rightarrow \phi$. Hence, we conclude that $(\Box \lnot p \land \Box (\g p\land \h p)) \rightarrow \phi$ is invalid as well. 

\end{proof}

\begin{lemma}\label{lm:properties-of-MCS} Let $\Gamma$ be a MCS. Then, $\Gamma$ has the following properties:
\begin{itemize}

\item $\Gamma \der \phi$ iff $\phi \in \Gamma$;

\item $\phi \in \Gamma$ iff $\neg \phi \not\in \Gamma$;

\item $\phi \land \psi \in \Gamma$ iff $\phi \in \Gamma$ and $\psi \in \Gamma$.

\end{itemize}

\end{lemma}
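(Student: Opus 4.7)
The plan is to establish the three properties using only the definition of a maximally consistent set---namely consistency (i) and maximality (ii)---together with the propositional reasoning available through axiom $A0$ and modus ponens $R0$. These are the standard Lindenbaum-style properties of MCSs, so I will follow the textbook argument (cf.~\cite[Prop.~4.16]{BlaRijVen01}) and adapt it to the axiomatization of $\dtstit$.

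For property 1, the right-to-left direction is immediate since $\phi \in \Gamma$ implies $\Gamma \der \phi$ by the definition of $\der$. For the forward direction, suppose towards a contradiction that $\Gamma \der \phi$ but $\phi \notin \Gamma$. By maximality (ii), the set $\Gamma \cup \{\phi\}$ must satisfy $\Gamma \cup \{\phi\} \der \bot$ (otherwise it would be a consistent proper extension of $\Gamma$). Combining $\Gamma \cup \{\phi\} \der \bot$ with $\Gamma \der \phi$ via the standard propositional deduction argument (available because $A0$ contains all propositional tautologies and $R0$ is modus ponens) yields $\Gamma \der \bot$, contradicting consistency of $\Gamma$.

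For property 2, if both $\phi \in \Gamma$ and $\neg\phi \in \Gamma$, then by the propositional tautology $\phi \rightarrow (\neg\phi \rightarrow \bot)$ together with two applications of $R0$ one obtains $\Gamma \der \bot$, contradicting consistency; hence at most one of the two lies in $\Gamma$. Conversely, if $\neg\phi \notin \Gamma$, then by maximality $\Gamma \cup \{\neg\phi\}$ is inconsistent, so by the propositional deduction argument $\Gamma \der \neg\neg\phi$, whence $\Gamma \der \phi$ by double negation, and therefore $\phi \in \Gamma$ by property 1.

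For property 3, both directions reduce to property 1 together with propositional tautologies. The left-to-right direction uses the tautologies $(\phi \land \psi) \rightarrow \phi$ and $(\phi \land \psi) \rightarrow \psi$ (instances of $A0$) and $R0$ to obtain $\Gamma \der \phi$ and $\Gamma \der \psi$, hence $\phi, \psi \in \Gamma$ by property 1; the right-to-left direction similarly uses $\phi \rightarrow (\psi \rightarrow (\phi \land \psi))$. The main (and only) obstacle is making the informal appeal to the propositional deduction theorem precise inside the Hilbert-style system of Definition~\ref{def:axioms}; since $\dtstit$ extends classical propositional logic via $A0$ and $R0$, this is the standard textbook argument and introduces no new difficulties.
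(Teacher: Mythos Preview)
Your proposal is correct and follows essentially the same standard argument as the paper: both proofs establish the three properties via consistency, maximality, and the propositional reasoning available from $A0$ and $R0$, appealing (as the paper does implicitly when it writes ``$\Gamma \cup \{\phi\}$ is inconsistent, i.e., $\Gamma \der \phi \rightarrow \bot$'') to the deduction theorem for the propositional fragment. The only cosmetic difference is that for the backward direction of the second item the paper assumes both $\phi\notin\Gamma$ and $\neg\phi\notin\Gamma$ and derives $\Gamma\der\phi\land\neg\phi$, whereas you go through $\Gamma\der\neg\neg\phi$ and invoke property~1; these are equivalent variants of the same textbook proof.
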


\begin{proof} We prove each of the claims in turn:
\begin{itemize}

\item [(i)] Assume that $\phi \not\in \Gamma$. Since $\Gamma$ is a maximal, we know that $\Gamma \cup \{\phi\}$ is inconsistent, i.e., $\Gamma \der \phi \rightarrow \bot$. Due to the fact that $\Gamma$ is consistent, we know that $\Gamma \not\der \phi$. For the opposite direction observe that if $\phi \in \Gamma$, then trivially $\Gamma \der \phi$.

\item [(ii)] Suppose that $\phi \in \Gamma$. Observe that if $\neg \phi \in \Gamma$ as well, then $\Gamma$ would be inconsistent; hence, $\neg \phi \not \in \Gamma$. For the backwards direction, assume that $\neg \phi \not\in \Gamma$. If $\phi \not\in \Gamma$ as well, then since $\Gamma$ is a MCS, we know that both $\Gamma \cup \{\phi\} \der \bot$ and $\Gamma \cup \{\neg \phi\} \der \bot$. However, this implies that $\Gamma \der \phi \land \neg \phi$, thus contradicting the consistency of $\Gamma$. This implies that $\phi \in \Gamma$.

\item[(iii)] If $\phi \land \psi \in \Gamma$, then by fact (i) $\phi \in \Gamma$ and $\psi \in \Gamma$ since both $\phi$ and $\psi$ are derivable from $\Gamma$ when $\phi \land \psi \in \Gamma$. The opposite direction is proved similarly.

\end{itemize}

\end{proof}

\begin{lemma}\label{lm:diamond-def-of-canonical-relations} Let $\langle \alpha \rangle$ be dual to $[\alpha] \in \mathsf{Boxes}$. Then, $\R_{[\alpha]}\Gamma\Delta$ iff for all $\phi \in \lang$, if $\phi \in \Delta$, then $\langle \alpha \rangle \phi \in \Gamma$.
\end{lemma}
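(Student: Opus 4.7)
The plan is to verify both directions using the standard duality between $[\alpha]$ and $\langle \alpha\rangle$ together with the maximal consistency of $\Gamma$ and $\Delta$. The only tools needed are the canonical definition $\R_{[\alpha]}\Gamma\Delta \iff (\forall \phi)([\alpha]\phi \in \Gamma \Rightarrow \phi \in \Delta)$, the fact that $\langle \alpha \rangle \phi$ abbreviates $\neg [\alpha] \neg \phi$, and the MCS properties established in Lemma~\ref{lm:properties-of-MCS} (in particular, $\phi \in \Sigma$ iff $\neg\phi \notin \Sigma$ for any MCS $\Sigma$).

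For the forward direction, I would assume $\R_{[\alpha]}\Gamma\Delta$ and take an arbitrary $\phi \in \Delta$. Arguing by contraposition, suppose $\langle\alpha\rangle\phi \notin \Gamma$. By the MCS property, $\neg\langle\alpha\rangle\phi \in \Gamma$, which by duality means $[\alpha]\neg\phi \in \Gamma$ (appealing, if desired, to the fact that logically equivalent formulae belong to the same MCSs, which follows from closure under derivation). The canonical definition of $\R_{[\alpha]}$ then yields $\neg\phi \in \Delta$, contradicting the assumption $\phi \in \Delta$ together with consistency of $\Delta$.

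For the backward direction, I would assume that $\phi \in \Delta$ implies $\langle\alpha\rangle\phi \in \Gamma$ for every $\phi$, and check the canonical definition directly: take any $[\alpha]\psi \in \Gamma$ and aim to show $\psi \in \Delta$. Suppose $\psi \notin \Delta$; then $\neg\psi \in \Delta$ by maximality, so by the hypothesis $\langle\alpha\rangle\neg\psi \in \Gamma$, i.e.\ $\neg[\alpha]\psi \in \Gamma$. Together with $[\alpha]\psi \in \Gamma$ this contradicts consistency, so $\psi \in \Delta$ as required.

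I do not anticipate any obstacle: the proof is a purely propositional manipulation on top of the definitions, and no frame condition or irreflexivity machinery is invoked. The only subtlety worth flagging explicitly is the use of duality to pass between $\neg\langle\alpha\rangle\phi$ and $[\alpha]\neg\phi$ inside an MCS, which is justified by the fact that MCSs are closed under provable equivalence (an immediate consequence of Lemma~\ref{lm:properties-of-MCS}(i)).
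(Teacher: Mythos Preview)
Your proposal is correct and follows essentially the same approach as the paper: both directions are handled via duality and the MCS properties of Lemma~\ref{lm:properties-of-MCS}, with the only cosmetic difference being that the paper runs the contrapositive chain directly (e.g.\ from $\phi\in\Delta$ to $\neg\phi\notin\Delta$ to $[\alpha]\neg\phi\notin\Gamma$ to $\langle\alpha\rangle\phi\in\Gamma$) rather than packaging it as an argument by contradiction.
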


\begin{proof} Let $\langle \alpha \rangle$ be dual to $[\alpha] \in \mathsf{Boxes}$ and let $\Gamma$ and $\Delta$ be maximally consistent IRR-theories. We prove both directions of the equivalence.

First, assume that $\R_{[\alpha]}\Gamma\Delta$ holds and consider an arbitrary $\phi \in \Delta$. Since $\Delta$ is a MCS, we know that $\neg \phi \not\in \Delta$, which implies by the definition of $\R_{[\alpha]}$ that $[\alpha]\neg\phi \not\in \Gamma$. Due to the fact that $\Gamma$ is a MCS, this implies that$\neg [\alpha] \neg \phi \in \Gamma$, which further implies that $\langle \alpha \rangle \phi \in \Gamma$.

For the opposite direction of the equivalence assume that for all $\phi \in \lang$, if $\phi \in \Delta$, then $\langle \alpha \rangle \phi \in \Gamma$. Let $\psi \in \lang$ and assume that $[\alpha] \psi \in \Gamma$. Then, since $\Gamma$ is a MCS, we know that $\langle \alpha \rangle \neg \psi \not\in \Gamma$. Therefore, $\neg \psi \not\in \Delta$, which implies that $\psi \in \Delta$ since $\Delta$ is a MCS. Since $\psi$ was arbitrary, we have established that $\R_{[\alpha]}\Gamma\Delta$.

\end{proof}








\begin{customlem}{\ref{lm:consistent-formula-in-irr-theory}}
Let $\phi \in \lang$ be a consistent formula. Then, there exists an IRR-theory $\Gamma$ such that $\phi \in \Gamma$.
\end{customlem}

\begin{proof} Let $\phi \in \lang$ be a consistent formula. We enumerate the formulae of $\lang$ so that each formula in odd position is an element of $\mathsf{Zig}$ and make use of this enumeration to build an increasing sequence of consistent theories $\Gamma_{0}$, $\Gamma_{1}$, ..., $\Gamma_{n}$, ...

We let $\Gamma_{0} := \{\phi \land \Box \neg p \land \Box (\g p \land \h p)\}$ for some propositional variable $p$ not occurring in $\phi$. We define the sequence of $\Gamma_{n}$ (for $n > 0$) as follows: Assume that $\Gamma_{n}$ is defined and consider $\psi_{n}$ of the enumeration. We know that either $\Gamma_{n} \cup \{\neg \psi_{n}\}$ is consistent or $\Gamma_{n} \cup \{\psi_{n}\}$ is consistent. If $\Gamma_{n} \cup \{\neg \psi_{n}\}$ is consistent, set $\Gamma_{n+1} := \Gamma_{n} \cup \{\neg \psi_{n}\}$. If $\Gamma_{n} \cup \{\psi_{n}\}$ is consistent, then there are two cases to consider: either (i) $n$ is even or (ii) $n$ is odd. If $n$ is even, then set $\Gamma_{n+1} := \Gamma_{n} \cup \{\psi_{n}\}$. Otherwise, set $\Gamma_{n+1} := \Gamma_{n} \cup \{\psi_{n}, \psi_{n}(q)\}$, where $q$ is a propositional variable not occurring in $\Gamma_{n}$ or $\psi$. We define our desired maximally consistent IRR-theory as follows:
$$\Gamma := \displaystyle{\bigcup_{n \in \mathbb{N}} \Gamma_{n}}$$
To finish the proof we need to show that $\Gamma$ is both a MCS and IRR-theory. We first prove that (i) $\Gamma$ is a MCS and then show that (ii) $\Gamma$ is an IRR-theory.

To prove claim (i), it is useful to first prove that for all $n \in \mathbb{N}$, each $\Gamma_{n}$ is consistent. We show this claim by induction on $n$. In the base case, assume for a contradiction that $\Gamma_{0} = \{\phi \land \Box \neg p \land \Box (\g p \land \h p)\}$ is inconsistent. Hence, $\Box \neg p \land \Box (\g p \land \h p) \land \phi \der \bot$, which further implies that  $\der \Box \neg p \land \Box (\g p \land \h p) \rightarrow (\phi \rightarrow \bot)$. We may infer from the rule R2 that $\der \phi \rightarrow \bot$. However, we know that $\phi$ is consistent, meaning that $\not\der \phi \rightarrow \bot$. We have thus obtained a contradiction implying then that $\Gamma_{0}$ is in fact consistent. For the inductive step assume that $\Gamma_{n}$ is consistent. We want to show that $\Gamma_{n+1}$ is consistent. This trivially follows by the definition of $\Gamma_{n+1}$.

To prove that $\Gamma$ is a MCS, we must show that $\Gamma$ is both consistent and maximal. Assume for a contradiction that $\Gamma$ is inconsistent. Then, this implies that for some finite subset $\Gamma'$ of $\Gamma$, $\Gamma' \vdash \bot$. However, if this is the case, then there exists some $\Gamma_{n}$ such that $\Gamma_{n} \der \bot$. We know that this cannot be the case by the previous paragraph, and so, $\Gamma$ must be consistent. Assume now that there exists some $\Gamma'$ such that $\Gamma \subset \Gamma'$ and $\Gamma' \not\der \bot$. Let $\psi \in \Gamma' \setminus \Gamma$. Since $\psi$ is a formula in $\lang$, we know that if was considered at some point during the construction of the sequence $\Gamma_{0}$, $\Gamma_{1}$, ..., $\Gamma_{n}$, ... Since $\psi \not\in \Gamma$ this implies that there exists some $\Gamma_{m}$ such that $\Gamma_{m} \cup \{\psi\}$ is inconsistent. Therefore, $\Gamma_{m} \der \neg \psi$, which implies that $\Gamma \der \neg \psi$. Due to the fact that $\Gamma \subset \Gamma'$, it follows that $\Gamma' \der \neg \psi$ and $\Gamma' \der \psi$ since $\psi \in \Gamma'$, which is a contradiction. Therefore, $\Gamma$ is a MCS.

We now prove that $\Gamma$ is an IRR-theory. By construction we know that $\phi \land \Box \neg p \land \Box (\g p \land \h p) \in \Gamma_{0} \subset \Gamma$, and since $\Gamma$ is a MCS, it follows that $\Box \neg p \land \Box (\g p \land \h p) \in \Gamma$, thus satisfying the first condition of being an IRR-theory. The second condition of being an IRR-theory is satisfied by the fact that whenever a formula $\psi \in \mathsf{Zig}$ is added to $\Gamma_{m} \subset \Gamma$, for $m \in \mathbb{N}$, the formula $\psi(q)$ is added as well with $q$ fresh.

\end{proof}

\begin{customlem}{\ref{lm:existence-lemma}} 
Let $\Gamma$ be an IRR-theory and let $\langle \alpha \rangle$ be dual to $[\alpha] \in \mathsf{Boxes}$. For each 
 $\langle \alpha \rangle \phi \in \Gamma$ there exists an IRR-theory $\Delta$ such that $\R_{[\alpha]}\Gamma\Delta$.
\end{customlem}

\begin{proof} Similar to~\cite[Lem. 16]{Lor13}.



\end{proof}









\begin{customlem}{\ref{lm:D9-property}} Let $\Gamma$ be an IRR-theory in $W$. Then, there exists an IRR-theory $\Delta \in W$ such that $\R^{dt}_{\Box}\Gamma\Delta$ and for every IRR-theory $\Sigma \in W^{dt}$, if $\R^{dt}_{[i]}\Delta\Sigma$, then $\R^{dt}_{\Oi}\Gamma\Sigma$.

\end{customlem}


\begin{proof} Let $\Gamma$ be an arbitrary IRR-theory in $W^{dt}$. Since $\Gamma$ is an IRR-theory, there is a propositional variable $p$ such that $name(p) \in\Gamma$. Define
$$\Delta_0 := \{[i]\phi | \Oi\phi\in \Gamma\}\cup\{\psi|\Box\psi\in \Gamma\}\cup\{name(p)\}.$$


We will prove by contradiction that $\Delta_0$ is consistent and then extend $\Delta_{0}$ to an IRR-theory.

If $\Delta_0$ is inconsistent, then
$$\vdash_{\dtstit} ([i]\phi_{i}\land ...\land [i]\phi_n \land \psi_1\land ...\land \psi_n \land name(p))\rightarrow \bot$$
where $\psi_{1}, \cdots, \psi_{m} \in \{\psi | \Box \psi \in \Gamma \}$ and $[i] \phi_{1}, \cdots, [i] \phi_{k} \in \{[i] \phi | \Oi \phi \in \Gamma \}$. Let $\hat{\phi} = \phi_1\land...\land\phi_n$ and $\hat{\psi} = \psi_1\land...\land\psi_n$. Since, $\vdash_{\dtstit} [i]\hat{\phi} \leftrightarrow [i] \phi_1\land...\land [i]\phi_n$ we get
$$\vdash_{\dtstit} \hat{\psi} \land name(p)\rightarrow \lnot [i] \hat{\phi}$$
By necessitation for $\Box$ and the $\Box$ K-axiom, we get $\der \Box (\hat{\psi} \land name(p))\rightarrow \Box \lnot [i] \hat{\phi}$, which implies $\der \Box \hat{\psi} \land \Box name(p) \rightarrow \lnot \Diam [i] \hat{\phi}$. Clearly, because $\Box\hat{\psi} \in \Gamma$, $name(p) \in \Gamma$ and $\vdash_{\dtstit} name(p)\rightarrow \Box name(p)$, we have that $\Gamma \der \lnot \Diam [i]\hat{\phi}$. This implies that $\lnot \Diam [i]\hat{\phi} \in \Gamma$ since $\Gamma$ is an IRR-theory.


Also, since $\Oi\phi_1,...,\Oi\phi_n\in\Gamma$ we have $\Oi\phi_1\land...\land\Oi\phi_n\in\Gamma$ since $\Gamma$ is an IRR-theory. By $\vdash_{\dtstit} \Oi \hat{\phi} \leftrightarrow \Oi \phi_1\land...\land\Oi\phi_n$ we conclude $\Oi \hat{\phi}\in\Gamma$ as well. Since $\Oi\hat{\phi}\rightarrow \Diam[i]\hat{\phi}\in \Gamma$ because the formula is an instance of axiom A14, we obtain by modus ponens that $\Diam[i]\hat{\phi}\in \Gamma$. Since $\Gamma$ is an IRR-theory (and hence consistent) we obtain a contradiction, which proves that $\Delta_0$ is consistent. 
 
We now extend $\Delta_0$ to an IRR-theory $\Delta$ by first defining an increasing sequence $\Delta_{0}$, $\Delta_{1}$, ..., $\Delta_{n}$, ... of sets of formulae. Suppose that $\Delta_{n}$ is consistent and defined, and enumerate the formulae of $\lang$ so that each formula in odd position is an element of $\mathsf{Zig}$; we aim to define $\Delta_{n+1}$.

Consider the formula $\psi_{n}$. Either, $\Delta_{n} \cup \{\neg \psi_{n}\}$ is consistent or $\Delta_{n} \cup \{\psi_{n}\}$ is consistent. If the former holds, then set $\Delta_{n+1} := \Delta_{n} \cup \{\neg \psi_{n}\}$. If the latter holds, then there are two subcases to consider: either $n$ is even, in which case, we set $\Delta_{n+1} := \Delta_{n} \cup \{\psi_{n}\}$, or $n$ is odd, in which which case, $\Delta_{n} \cup \{\psi_{n}\}$ is consistent and $\psi_{n} \in \mathsf{Zig}$. We show that in the latter subcase we can find a propositional variable $q$ such that $\Delta_{n} \cup \{\psi_{n}, \psi_{n}(q)\}$ is consistent; we then define $\Delta_{n+1} := \Delta_{n} \cup \{\psi_{n}, \psi_{n}(q)\}$.

Observe that
\begin{equation}
\ODi (name(p) \land \bigwedge_{\chi \in \Delta_{n} \setminus \Delta_{0}} \chi \land \psi_{n}) \in \Gamma
\end{equation}
For otherwise, 
$$\Oi ((name(p) \land \bigwedge_{\chi \in \Delta_{n} \setminus \Delta_{0}} \chi) \rightarrow \neg \psi_{n}) \in \Gamma$$
since $\Gamma$ is an IRR-theory and has the properties specified by Lemma \ref{lm:properties-of-MCS}. By the definition of $\Delta_{0}$ it follows that
$$[i] ((name(p) \land \bigwedge_{\chi \in \Delta_{n} \setminus \Delta_{0}} \chi) \rightarrow \neg \psi_{n}) \in \Delta_{n}$$
Using the fact that $\der [i] \theta \rightarrow \theta$ holds for any formula $\theta$, we infer that
$$\Delta_{n} \der (name(p) \land \bigwedge_{\chi \in \Delta_{n} \setminus \Delta_{0}} \chi) \rightarrow \neg \psi_{n}$$
Since
$$\Delta_{n} \der name(p) \land \bigwedge_{\chi \in \Delta_{n} \setminus \Delta_{0}} \chi$$
we may conclude that $\Delta_{n} \der \neg \psi_{n}$, which contradicts the fact that $\Delta_{n} \cup \{\psi_{n}\}$ is consistent. Therefore, since $\Gamma$ is an IRR-theory and (1) holds, we know that
\begin{equation}
\ODi(name(p) \land \bigwedge_{\chi \in \Delta_{n} \setminus \Delta_{0}} \chi \land \psi_{n}(q)) \in \Gamma
\end{equation}
Using this fact, we may prove that $\Delta_{n+1} := \Delta_{n} \cup \{\psi_{n}, \psi_{n}(q)\}$ is consistent, for suppose otherwise. Then, there exist $\zeta_{1}, \cdots, \zeta_{m} \in \{\zeta | \Box \zeta \in \Gamma \}$ and $[i] \xi_{1}, \cdots, [i] \xi_{k} \in \{[i] \xi | \Oi \xi \in \Gamma \}$ such that
$$\der \zeta_{1} \land \cdots \land\zeta_{m} \rightarrow ([i] \xi_{1} \land \cdots \land [i] \xi_{k} \rightarrow \neg (name(p) \land \bigwedge_{\chi \in \Delta_{n} \setminus \Delta_{0}} \chi \land \psi_{n}(q)))$$
By $\Oi$ necessitation and the $\Oi$ K-axiom, we can derive
$$\der \Oi (\zeta_{1} \land \cdots \land\zeta_{m}) \rightarrow \Oi ([i] \xi_{1} \land \cdots \land [i] \xi_{k} \rightarrow \neg (name(p) \land \bigwedge_{\chi \in \Delta_{n} \setminus \Delta_{0}} \chi \land \psi_{n}(q)))$$
Using axiom A13 we obtain
$$\der \Box (\zeta_{1} \land \cdots \land\zeta_{m}) \rightarrow \Oi ([i] \xi_{1} \land \cdots \land [i] \xi_{k} \rightarrow \neg (name(p) \land \bigwedge_{\chi \in \Delta_{n} \setminus \Delta_{0}} \chi \land \psi_{n}(q)))$$
By our assumption and the fact that $\Gamma$ is an IRR-theory, we know that $\Box (\zeta_{1} \land \cdots \land\zeta_{m}) \in \Gamma$, implying that
$$\Oi ([i] \xi_{1} \land \cdots \land [i] \xi_{k} \rightarrow \neg (name(p) \land \bigwedge_{\chi \in \Delta_{n} \setminus \Delta_{0}} \chi \land \psi_{n}(q))) \in \Gamma$$
We infer the following using modal reasoning
$$\Oi [i] (\xi_{1} \land \cdots \land \xi_{k}) \rightarrow \Oi \neg (name(p) \land \bigwedge_{\chi \in \Delta_{n} \setminus \Delta_{0}} \chi \land \psi_{n}(q))) \in \Gamma$$
One can confirm that $\der \Oi \theta \rightarrow \Oi [i] \theta$ (See~\cite{Mur04}) holds for any formula $\theta$, and therefore
$$\Oi (\xi_{1} \land \cdots \land \xi_{k}) \rightarrow \Oi \neg (name(p) \land \bigwedge_{\chi \in \Delta_{n} \setminus \Delta_{0}} \chi \land \psi_{n}(q))) \in \Gamma$$
Our assumption implies that $\Oi (\xi_{1} \land \cdots \land \xi_{k}) \in \Gamma$, and so
$$\Oi \neg (name(p) \land \bigwedge_{\chi \in \Delta_{n} \setminus \Delta_{0}} \chi \land \psi_{n}(q))) \in \Gamma$$
This contradicts (2) and proves that $\Delta_{n} \cup \{\psi_{n} \psi_{n}(q)\}$ is consistent.

It is easy to infer that $\Delta$ is an IRR-theory by an argument similar to Lemma \ref{lm:consistent-formula-in-irr-theory}.

Clearly, $\R^{dt}_{\Box}\Gamma\Delta$ holds by the definition of $\Delta$. Last, let $\Sigma$ be an arbitrary IRR-theory in $W^{dt}$. Assume that $\R^{dt}_{[i]}\Delta\Sigma$ holds and let $\Oi \xi \in \Gamma$. By definition $[i] \xi \in \Delta$, and so, $\xi \in \Sigma$ by the definition of the relation $\R^{dt}_{[i]}$, which completes the proof.

\end{proof}

\begin{customlem}{\ref{lm:D11-property}} Let $\Gamma$ and $\Delta$ be IRR-theories in $W^{dt}$ such that $\R^{dt}_{\Oi}\Gamma\Delta$. Then, there exists an IRR-theory $\Sigma \in W$ such that $\R^{dt}_{\Box}\Gamma\Sigma$, $\R^{dt}_{[i]}\Sigma\Delta$, and for all $\Pi \in W^{dt}$, if $\R^{dt}_{[i]}\Sigma\Pi$, then $\R^{dt}_{\Oi}\Gamma\Pi$.
\end{customlem}


\begin{proof} To prove this lemma, we proceed differently compared to Lemma \ref{lm:D9-property}, making explicit use of the existence lemma (Lemma \ref{lm:existence-lemma}). Let $\Gamma$ and $\Delta$ be IRR-theories in $W^{dt}$ such that $\R^{dt}_{\Oi}\Gamma\Delta$. Then, there is a $name(p)$ for some $p$ such that $name(p)\in \Delta$. Since $\phi\rightarrow \langle i \rangle \phi\in \Delta$ for any $\phi\in \mathcal{L}_{\tds}$ we know $\langle i\rangle name(p)\in\Delta$. Hence, by Lemma \ref{lm:existence-lemma} we know there exists a $\Sigma\in W^{dt}$ for which $\R^{dt}_{[i]}\Delta\Sigma$. First, we show (i) $\R^{dt}_{[i]}\Sigma\Delta$, then we show (ii) $\R^{dt}_{\Box}\Gamma\Sigma$ and last we show (iii) for any $\Pi\in W^{dt}$ for which $\R^{dt}_{[i]}\Sigma\Pi$, we have $\R^{dt}_{\Oi}\Gamma\Pi$. 

(i) Recall $\R^{dt}_{[i]}\Delta\Sigma$, take an arbitrary $[i]\phi\in \Sigma$, it suffices to show that $\phi\in \Delta$. By Lemma \ref{lm:diamond-def-of-canonical-relations}, we know that $\langle i\rangle [i] \phi\in \Delta$. Since $\vdash_{\tds} \langle i\rangle [i]\theta \rightarrow \theta$ for any $\theta\in\mathcal{L}_{\tds}$ (by axiom A5, A6, and propositional reasoning) we obtain $\phi\in\Delta$; hence $\R^{dt}_{[i]}\Sigma\Delta$.

(ii) Assume an arbitrary $\Box\phi\in \Gamma$. We prove that $\phi\in \Sigma$. We know $\vdash_{\tds} \Box\phi\rightarrow \Oi\phi$ (axiom A13). Hence, since $\Gamma$ is an IRR-theory, we obtain $\Oi\phi\in \Gamma$. Furthermore, $\vdash_{\tds} \Oi\phi\rightarrow\Oi[i]\phi$ (See~\cite{Mur04}), and therefore, $\Oi[i]\phi\in\Gamma$. Since $\R^{dt}_{\Oi}\Gamma\Delta$ we get $[i]\phi\in \Delta$ and thus, by the fact that $\R^{dt}_{[i]}\Delta\Sigma$, we know $\phi\in\Sigma$. We conclude $\R^{dt}_{\Box}\Gamma\Sigma$.

(iii) Take an arbitrary $\Pi\in W^{dt}$. Assume $\R^{dt}_{[i]}\Sigma\Pi$. and $\Oi\phi \in\Gamma$. Since $\Oi\phi\rightarrow \Oi[i]\phi\in \Gamma$, $\Oi[i]\phi\in\Gamma$. Furthermore, since $\vdash_{\tds} \Oi[i]\theta\rightarrow \Oi[i][i]\theta$ for any $\theta\in\mathcal{L}_{\tds}$ (A5, A6, R1, A12), we know $\Oi[i][i]\phi\in\Gamma$, and thus $[i][i]\phi\in\Delta$. Consequently, we get $[i]\phi\in\Sigma$ and last $\phi\in\Pi$, giving us $\R^{dt}_{\Oi}\Gamma\Pi$.
\end{proof}

\begin{customlem}{\ref{lm:caonical-model-is-dtstit-model}} The canonical model $M^{dt}|_{\mathsf{IRR}}$ belongs to the class of $\dtstit$ models.
\end{customlem}

\begin{proof} The argument that $M^{dt}|_{\mathsf{IRR}}$ possesses properties $\mathbf{(C1)}, \mathbf{(C2)}, \mathbf{(C3)^{*}}, \mathbf{(T4)}-\mathbf{(T7)}$ is the same as in~\cite[Lem. 19]{Lor13}. Therefore, we need only confirm that the model satisfies conditions $\mathbf{(D8)}$-$\mathbf{(D11)}$.

The fact that $M^{dt}|_{\mathsf{IRR}}$ satisfies conditions $\mathbf{(D9)}$ and $\mathbf{(D11)}$ follows from Lemma \ref{lm:D9-property} and \ref{lm:D11-property}. We additionally prove that $M^{dt}|_{\mathsf{IRR}}$ satisfies conditions $\mathbf{(D8)}$ and $\mathbf{(D10)}$.

 
\textbf{(D8)} Let $\Gamma$ and $\Delta$ be arbitrary IRR-theories. Assume that $\R^{dt}_{\Oi}\Gamma\Delta$ and assume that $\phi\in \Delta$. Hence, by Lemma \ref{lm:diamond-def-of-canonical-relations}, we know that $\ODi\phi\in\Gamma$. Since $\Box\lnot\phi\rightarrow\Oi\lnot\phi\in\Gamma$, we have $\ODi\phi\rightarrow \Diam\phi\in \Gamma$. Hence, $\Diam\phi\in\Gamma$, which implies that $\R^{dt}_{\Box}\Gamma\Delta$.


\textbf{(D10)} Let $\Gamma,\Delta,\Sigma,\Pi\in W^{dt} \cap \mathsf{IRR}$ and assume that $\R^{dt}_{\Box}\Gamma\Delta$, $\R^{dt}_{\Box}\Gamma\Sigma$, and $\R^{dt}_{\Oi}\Sigma\Pi$. We will show that $\R^{dt}_{\Oi}\Delta\Pi$.

Let $\phi\in \mathcal{L}_{\dtstit}$ and assume $\phi\in \Pi$. Then $\ODi\phi\in\Sigma$ and, hence, $\Diam\ODi\phi\in\Gamma$ by Lemma \ref{lm:diamond-def-of-canonical-relations}. Since
$$\vdash_{\dtstit} (\Diam\Oi\phi\rightarrow\Box\Oi\phi )\rightarrow (\Diam\ODi\phi\rightarrow\Box\ODi\phi)$$
and
$$\Diam\Oi\phi\rightarrow\Box\Oi\phi\in \Gamma$$
we may infer that $\Diam\ODi\phi\rightarrow\Box\ODi\phi\in \Gamma$. Due to the fact that $\Diam\ODi\phi\in\Gamma$, we obtain $\Box\ODi\phi\in\Gamma$, and so, $\ODi\phi\in\Delta$. Therefore, $\R^{dt}_{\Oi}(\Delta,\Pi)$. 


\end{proof}

\begin{customthm}{\ref{lm:completeness}}
If $\phi \in \lang$ is a consistent formula, then $\phi$ is satisfiable on a $\dtstit$-model.
\end{customthm}

\begin{proof} Suppose that $\phi \in \lang$ is consistent. By Lemma \ref{lm:consistent-formula-in-irr-theory}, we can extend $\phi$ to an IRR-theory $\Gamma$ such that $\phi \in \Gamma$. By Lemma \ref{lm:existence-lemma}, we know that the set $\mathsf{IRR}$ is a diamond saturated set, and so, by Lemma \ref{lm:non-irr-truth-lemma}, we know that $M^{dt}|_{\mathsf{IRR}}, \Gamma \models \phi$ iff $\phi \in \Gamma$. Hence, we can conclude that $M^{dt}|_{\mathsf{IRR}}, \Gamma \models \phi$. By Lemma \ref{lm:caonical-model-is-dtstit-model} we know that $M^{dt}|_{\mathsf{IRR}}$ is a $\dtstit$-model; therefore, $\phi$ is satisfiable on a $\dtstit$-model.

\end{proof}

\end{document}